\newcommand{\p}[1]{\mathop{\mbox{\it p} } }
\renewcommand{\vec}[1]{\ensuremath{\boldsymbol{#1}}}
\newcommand{\be}{\begin{equation}}
\newcommand{\ee}{\end{equation}}
\newcommand{\ba}{\begin{array}}
\newcommand{\ea}{\end{array}}
\newcommand{\bea}{\begin{eqnarray}}
\newcommand{\eea}{\end{eqnarray}}
\newcommand{\bean}{\begin{eqnarray*}}
\newcommand{\eean}{\end{eqnarray*}}
\newcommand{\rmh}{^{\dag}}
\newcommand{\rmt}{^{\rm T}}
\newcommand{\scell}[2][c]{\begin{tabular}[#1]{@{}c@{}}#2\end{tabular}}
\definecolor{white}{rgb}{1,1,1}
\newtheorem{theorem}{Theorem}
\newtheorem{lemma}{Lemma}
\newtheorem{example}{Example}
\newtheorem{property}{Property}
\newtheorem{corollary}{Corollary}
\begin{document}

\title{Optimal Channel Shortener Design for Reduced-State Soft-Output Viterbi Equalizer in Single-Carrier Systems}

\author
{
\begin{tabular}{c}
Sha Hu$^{\dagger}$, Harald Kr\"{o}ll$^{\ddagger}$, Qiuting Huang$^{\ddagger}$, and Fredrik Rusek$^{\dagger}$ 
\end{tabular}
\thanks{$^{\dagger}$Department of Electrical and Information Technology, Lund University, Lund, Sweden (email: firstname.lastname@eit.lth.se).

$^{\ddagger}$Integrated Systems Laboratory, ETH Z\"{u}rich, Switzerland (email: \{kroell, huang\}@iis.ee.ethz.ch).}
}

\maketitle

\begin{abstract}
We consider optimal channel shortener design for reduced-state soft-output Viterbi equalizer (RS-SOVE) in single-carrier (SC) systems. To use RS-SOVE, three receiver filters need to be designed: a prefilter, a target response and a feedback filter. The collection of these three filters are commonly referred to as the \lq\lq{}channel shortener\rq\rq{}. Conventionally, the channel shortener is designed to transform an intersymbol interference (ISI) channel into an equivalent minimum-phase equivalent form. In this paper, we design the channel shortener to maximize a mutual information lower bound (MILB) based on a mismatched detection model. By taking the decision-feedback quality in the RS-SOVE into consideration, the prefilter and feedback filter are found in closed forms, while the target response is optimized via a gradient-ascending approach with the gradient explicitly derived. The information theoretical properties of the proposed channel shortener are analyzed. Moreover, we show through numerical results that, the proposed channel shortener design achieves superior detection performance compared to previous channel shortener designs at medium and high code-rates. 
\end{abstract}

\begin{IEEEkeywords}
Single Carrier, Intersymbol Interference, Channel Shortener, Prefilter, Target Response, Feedback Filter, Soft-Output Viterbi Equalizer, Mutual Information, Forney Model, Ungerboeck Model.

\end{IEEEkeywords}

\section{Introduction}

Communication systems based on single carrier (SC) modulation are currently used in 2G networks~\cite{H16} which have the largest number of subscribers worldwide. Besides personal mobile communication they play a key role in the latest LTE-Advanced Release 13 \cite{3gpp36201}, where on the path to 5G Internet of Things (IoT) networks, the standard EC-GSM-IoT was released together with SC waveforms for high power efficiency requirements \cite{IoT}. Moreover, SC modulation is also used in satellite communications and high-speed serial links\cite{GF12}. The advantages of a low peak-to-average-power ratio (PAPR), low device complexity, straightforward synchronization, and the absence of cyclic-prefix (CP) overhead favor its use in many low data rate scenarios over multi-carrier (MC) systems\cite{CD15, FASE02}. However, SC systems suffer from intersymbol interference (ISI) caused by delay dispersion along the multi-path propagation from the transmitter to the receiver.

In order to combat intersymbol interference (ISI) caused by delay dispersion in propagation channels in SC systems, Forney proposed the Viterbi algorithm (VA) \cite{F72} that implements maximum log-likelihood sequence estimation (MLSE). With error correcting codes such as turbo codes\cite{turbo}, low-density parity-check (LDPC) codes\cite{ldpc}, and polar codes\cite{polar}, it is well-known that soft-decisions output from the equalizer, i.e., the reliability information, are superior to hard-decisions. In \cite{sova}, Hagenauer and Hoeher modify the VA to soft output Viterbi algorithm (SOVA), which generates the soft-decisions by considering paths that merge with the ML path in the trellis within a decision delay. However, such a decision delay is usually quite long, a typical value is $5(L\!+\!1)$, with $L$ being the tap-length of the considered channel impulse response (CIR) $\vec{h}$. In \cite{KB90}, Koch and Baier proposed the soft output Viterbi equalizer (SOVE). Rather than minimizing the sequence error probability in SOVA, the SOVE uses a trellis-based algorithm that minimizes the bit error probability.

To further reduce the receiver complexity, the authors in \cite{KB90} also proposed the suboptimal reduced-state SOVE (RS-SOVE). Different from the SOVE whose trellis spans over all $L$ taps of $\vec{h}$, the trellis in RS-SOVE only spans the first ($\nu\!+\!1$) taps, and the signal part corresponding to the remaining ($L\!-\!\nu\!-\!1$) channel tails is canceled by a state-dependent decision-feedback along the detection. The RS-SOVE is simple to implement and performs nearly as good as the full-complexity SOVE. Note that, the RS-SOVE can also be reviewed as a soft-output extension of the delayed decision-feedback sequence estimation (DDFSE)\cite{DC89}, which combines VA and the decision-feedback detection to approximate the MLSE. 

On the other hand, in order to transform $\vec{h}$ into a new target response, which renders better performance in conjunction with the RS-SOVE, the channel shorteners are commonly utilized prior to the RS-SOVE. Therefore, due to its low-complexity, simple-implementation and good-performance, the RS-SOVE together with channel shortener is widely used in the receiver design of devices in SC systems. A typical overview of such systems is depicted in Fig. 1. Normally, the channel shortener requires three receiver filters to be designed: a prefilter (the tap-length is up to design), a ($\nu\!+\!1$)-tap target response, and a ($L\!-\!\nu\!-\!1$)-tap feedback filter. 

Traditionally, there are two types of processing schemes for designing the channel shortener, namely, the Forney detection model\cite{HR151} which assumes white noise, and the Ungerboeck detection model\cite{RP12} which assumes that the noise is colored according to the target response autocorrelation. A conventional design of the Forney model based channel shortener is to use an all-phase filter to transform $\vec{h}$ into the minimum-phase equivalent $\tilde{\vec{h}}$. Then, the target response is set to the first ($\nu\!+\!1$) taps of $\tilde{\vec{h}}$, while the feedback filter is set to the remaining taps. The all-pass prefilter can be designed based on various criteria \cite{OS89, GOMB02, A01, BZKWH12} such as linear minimum-mean-square-error (LMMSE), linear prediction, and homomorphic filtering. The authors in \cite{BZKWH12} showed that, the homomorphic filter has lower-complexity, simpler hardware-implementation, and superior performance than the other prefilter designs. We refer to such a conventional channel shortener design as the \lq\lq{}HOM\rq\rq{} shortener.

In \cite{RP12}, the Ungerboeck model based channel shortener design was developed. A prefilter $\vec{v}$ and target response $\vec{g}$ are designed to maximize a mutual information lower bound (MILB) based on a mismatched detection model. However, the feedback filter is not utilized in the detection model, which means that the $(L\!-\!\nu\!-\!1)$ channel tails are truncated directly. We refer to such a state-of-the-art design as the \lq\lq{}UBM\rq\rq{} shortener. As there is no feedback filter, with the UBM shortener there is no decision-feedback process in the RS-SOVE. In \cite{H16, KZWR15}, the UBM shortener was successfully implemented for GSM/EDGE systems, and showed superior detection performance, yet with a much lower complexity than the HOM shortener. However, as shown in \cite{KZWR15}, in the high signal-to-noise (SNR) regime\footnote{In relation to higher-order modulations and code-rates, which require high SNRs to decode.}, the UBM shortener suffers from performance losses and renders a bit-error-rate (BER) error floor. 

In this paper, we propose a novel channel shortener design for RS-SOVE aiming to overcome the performance losses of the UBM shortener. As will be explained later, the UBM shortener can not be extended by decision-feedback using the methods introduced in \cite{RP12, H16, KZWR15}. Instead we show that we can overcome the performance losses of the UBM by applying the information theoretical MILB approach to the Forney model instead of the Ungerboeck model. Since we derive a Forney model equalizer that is equipped with MILB-maximization channel shortening filters, we refer to this approach as the \lq\lq{}FOM\rq\rq{} shortener. Note that, both the HOM and FOM shorteners adopt the same the Fornery model for channel shortener designs. The difference is that, the HOM shortener is a conventional design, while the FOM shortener optimizes the receiver filters to maximize the information rate\footnote{The information rate is a bound on the rate that can be transmitted, but are not a capacity since there are constraints on the transmit signals and the decoding operations.}. Therefore, the FOM shortener always performs better than the HOM shortener from an information-theoretical perspective.

On the other hand, if we constrain the feedback filter to be $\vec{0}$, in which case the RS-SOVE utilizes no feedback, the UBM shortener is superior to the FOM shortener. This is because, with no feedback utilized, the Ungerboeck model is more general than the Fornery model. However, when the feedback filter is not $\vec{0}$, the UBM shortener is not applicable for the RS-SOVE, due to the lack of a probabilistic meaning of the branch metric definition\cite{LR08, F15}. Hence, the UBM shortener is constrained to the case that the feedback filter equals $\vec{0}$, while the FOM shortener can jointly optimize all three receiver filters. Therefore, the FOM shortener is superior to the UBM shortener when the feedback has good quality. 

In this work, we show that although at low code-rates the UBM shortener performs better than both the HOM and FOM shorteners, it suffers from significant performance losses at medium and high code-rates. This phenomenon, however, does not exist for the FOM shortener, which outperforms the UBM shortener at medium and high code-rates, and better than the conventional HOM shortener in all cases. These three different channel shorteners considered in this paper are listed in Table I, with FOM shortener being the proposed channel shortener design and the remaining two are the reference designs. 

\begin{table}[ht]
\renewcommand{\arraystretch}{1.2}
\vspace{-3mm}
\centering
\caption{Channel shortener designs and Parameter Notations}
\label{tab1}
\vspace{-2mm}
\begin{tabular}{|c|c|c|c|c|}
\hline
Name & prefilter &\scell{target \\response}& \scell{feedback \\filter} & \scell{RS-SOVE cooperates\\ with feedback?}\\ \hhline{|=|=|=|=|=|} 
FOM&     $\vec{w}$&     $\vec{f}$&  $\vec{b}$& yes \\ \hline
UBM&     $\vec{v}$&      $\vec{g}$&  $\vec{0}$& no   \\ \hline
HOM&    $\vec{w}_{\mathrm{hom}}$ &  $\vec{h}_{\mathrm{f}}$& $\vec{h}_{\mathrm{b}}$& yes \\ \hline
\end{tabular}
\vspace{-2mm}
\end{table}

The main contributions of this paper are as follows. Firstly, we propose the FOM shortener for RS-SOVE with the MILB derived in closed form. The prefilter and feedback filter are found in closed forms, and the target response utilizes a gradient-ascending optimization. Secondly, we analyze the optimal parameter design of the FOM channel shortener by considering the feedback quality, and show that the FOM shortener can be designed for the perfect feedback. We further show that, the FOM shortener outperforms the UBM shortener at medium and high code-rates, and is superior to the HOM shortener in all cases. Lastly, we analyze information-theoretic properties and information rates of the FOM shortener in relation to Shannon capacity $\mathcal{C}$ and the previous channel shortener designs. In addition, we extend the RS-SOVE to an arbitrary delay $D$, and show an interesting fact that, the trellis search process in RS-SOVE is equivalent to a full forward recursion and $D$-depth backward recursion. 

The rest of the paper is organized as follows. In Sec. II, the received signal model, conventional HOM shortener, and RS-SOVE are introduced. In Sec. III, the proposed FOM shortener is derived, and the optimal design of the filters ($\vec{w}$, $\vec{f}$, $\vec{b}$) with feedback quality is elaborated. In Sec. IV, the links of theoretical information rates among all three channel shorteners are established. Empirical results are provided in Sec. V, and Sec. VI concludes the paper.

\subsubsection*{Notations}
Throughout this paper, boldface lowercase letters indicate vectors and boldface uppercase letters designate matrices. Superscripts $(\cdot)^{-1}$, $(\cdot)^{\ast}$,
$(\cdot)\rmt$ and $(\cdot)\rmh$ stand for the inverse, complex conjugate, transpose, and
Hermitian transpose, respectively. Furthermore, $\mathbb{E}[\cdot]$ is the expectation operator, and $\mathcal{R}\{\cdot\}$ takes the real part of the arguments. We reserve \lq{}$\star$\rq{} to denote linear convolution, $\vec{I}$ to represent an identity matrix, and $\mathrm{vec}\left(\vec{A}\right)$ to stack the columns of $\vec{A}$ on top of each other.

\section{Received Signal Model and the HOM Detector}
The considered SC system that applies channel shortening and RS-SOVE is depicted in Fig. 1. With sufficiently good interleaving, we assume the transmit bits to be independent. The transmit symbols $x_k$ have unit-energy and are drawn from a constellation $\mathcal{X}$, whose cardinality is $|\mathcal{X}|$. Considering the data transmission over a dispersive channel with additive noise, the received sample $y_k$ at time epoch $k$ is modeled as
\bea \label{model1} y_k =\sum_{\ell=0}^{L\!-1}h_{\ell}x_{k-\ell}+n_{k},\eea
where $L$ is the ISI duration, and $h_{\ell}$ is the $\ell$th tap of the CIR $\vec{h}\!=\!\left(h_0 \; h_1 \;\ldots \;h_{L-1}\right)$. The noise variables $n_{k}$ are identical and independently distributed (IID) zero-mean complex Gaussian variables with variance $N_0$. For a transmit block comprising $K$ symbols\footnote{We assume that $L\!-\!1$ zero-symbols are inserted between continuous data blocks (i.e., the guard period) to prevent inter-block interference.}, we denote the signal vector $\vec{y}$, receive vector $\vec{x}$, and noise vector $\vec{n}$ as
{\setlength\arraycolsep{2pt}  \bea 
\vec{y}&=&\left(y_0 \; y_{1} \;\ldots \;y_{K+L-1}\right)\rmt, \notag \\
\vec{x}&=&\left(x_0 \; x_{1} \;\ldots \;x_{K-1}\right)\rmt, \notag \\
\vec{n}&=&\left(n_0 \; n_{1} \;\ldots \;n_{K-1}\right)\rmt,\notag\eea}
\hspace{-1.4mm}respectively. The signal model (\ref{model1}) that comprises $(K\!+\!L\!-\!1)$ received samples can  be written as
\bea \label{model2} \vec{y} =\vec{h}\star\vec{x}+\vec{n}, \eea
or equivalently,
  \bea \label{model3}   \vec{y}=\vec{H}\vec{x}+\vec{n},\eea
where the $(K\!+\!L\!-\!1)\!\times\!K$ Toeplitz matrix $\vec{H}$ is generated from $\vec{h}$ as
{\setlength\arraycolsep{2pt} \bea \label{H1} \vec{H}\!=\!\left[\!\begin{array}{cccc} h_0& ~&~&~ \\ h_1&h_0&~&~ \\ \vdots& h_1&\ddots&~\\  h_{L-1}&\vdots&\ddots&h_0 \\ ~&h_{L-1}&\ddots&h_1 \\   ~&~&\ddots&\vdots \\  ~&~&~&h_{L-1}\end{array} \!\right]\!.  \eea}

\begin{figure*}
\hspace{0mm}
\centering
\vspace{-4mm}
\setlength\fboxsep{0pt}
\setlength\fboxrule{0pt}
\fbox{\includegraphics[width=6.6in]{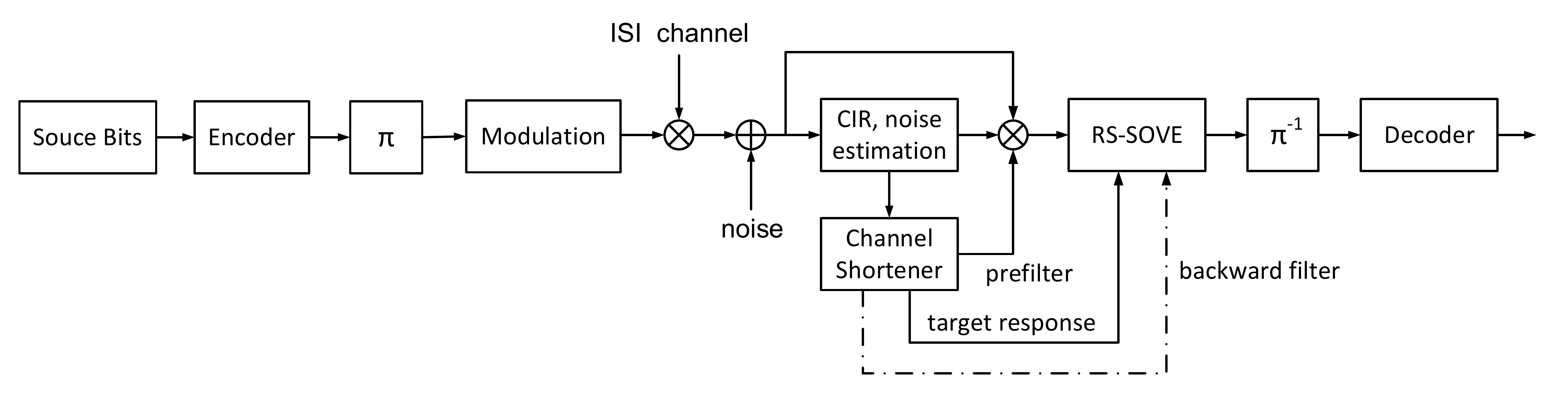}}
\vspace{-15mm}
\caption{Discrete time transmission and receive model with the channel shortener and RS-SOVE. Note that, with the UBM shortener the feedback filter is not needed and no decision-feedback is performed in the RS-SOVE. The CIR and noise estimation can be based on, e.g., pilot symbols.}
\label{fig1}
\vspace*{-6mm}
\end{figure*}

\subsection{Conventional HOM Channel Shortener}
Prior to the RS-SOVE, the HOM shortener utilizes homomorphic filtering to obtain the minimum-phase equivalent form of the causal response $\vec{h}$. With the prefilter $\vec{w}_{\mathrm{hom}}$ designed based on the cepstrum of $\vec{h}$ \cite{GOMB02}, the target response $\tilde{\vec{h}}\!=\!\vec{w}_{\mathrm{hom}}\star(\vec{h}/\sqrt{N_0})$, and the filtered samples $\tilde{\vec{y}}\!=\!\vec{w}_{\mathrm{hom}}\star(\vec{y}/\sqrt{N_0})$, the detection model after prefiltering reads
 \bea \label{model3}  \tilde{y}_k =\sum_{\ell=0}^{\nu}\tilde{h}_{\ell}x_{k-\ell}+\sum_{\ell=\nu+1}^{L-1}\tilde{h}_{\ell}x_{k-\ell}+\tilde{n}_{k},\eea
where $\nu$ denotes the memory length considered by the RS-SOVE so that its number of states becomes $|\mathcal{X}|^{\nu}$. Denoting
{\setlength\arraycolsep{2pt} \bea \label{hf} \vec{h}_{\mathrm{f}} &=&\big(\tilde{h}_0, \tilde{h}_1,\cdots, \tilde{h}_{\nu}\big),\\
\label{hb} \vec{h}_{\mathrm{b}}&=&\big(\underbrace{0, \cdots,0}_{\nu+1}, \tilde{h}_{\nu+1}, \tilde{h}_{\nu+2}, \cdots, \tilde{h}_{L-1}\big),\eea}
\hspace{-1.4mm}the second term in (\ref{model3}) is canceled by the hard feedback $\hat{x}_{\ell}$ on the surviving path that leads to each state after filtered by the feedback filter $\vec{h}_{\mathrm{b}}$. By setting $\nu\!=\!0$, the RS-SOVE becomes the decision-feedback detector, while with $\nu\!=\!L\!-\!1$, the RS-SOVE is the full-complexity SOVE. 

In contrast to BCJR algorithm \cite{bcjr} or Max-Log-Map (MLM)\cite{M05}, the backward recursions are omitted in RS-SOVE \cite{KB90}. In order to improve the quality of soft-decisions, we extend the decision-delay in RS-SOVE to an arbitrary value $D$, which can set to be larger than $\!L\!-\!1$. As shown next, the RS-SOVE with a delay $D$ can be viewed as the the MLM equalizer with a full forward recursion and $D$-step backward recursion at each detection stage. Hence, when $D$ is sufficiently large, the RS-SOVE performs as well as MLM. Such a modification only increases the equalization latency from $\nu$ to $D$, and introduces a small overheard by the $D$-step backward recursion process in the RS-SOVE. In \cite{BF98}, an improvement of RS-SOVE is also proposed by introducing an expanded memory, however, the number of states is exponentially increased and results in higher memory cost.

\begin{figure}
\begin{center}
\hspace*{0mm}
\vspace*{-4mm}
\scalebox{0.48}{\includegraphics{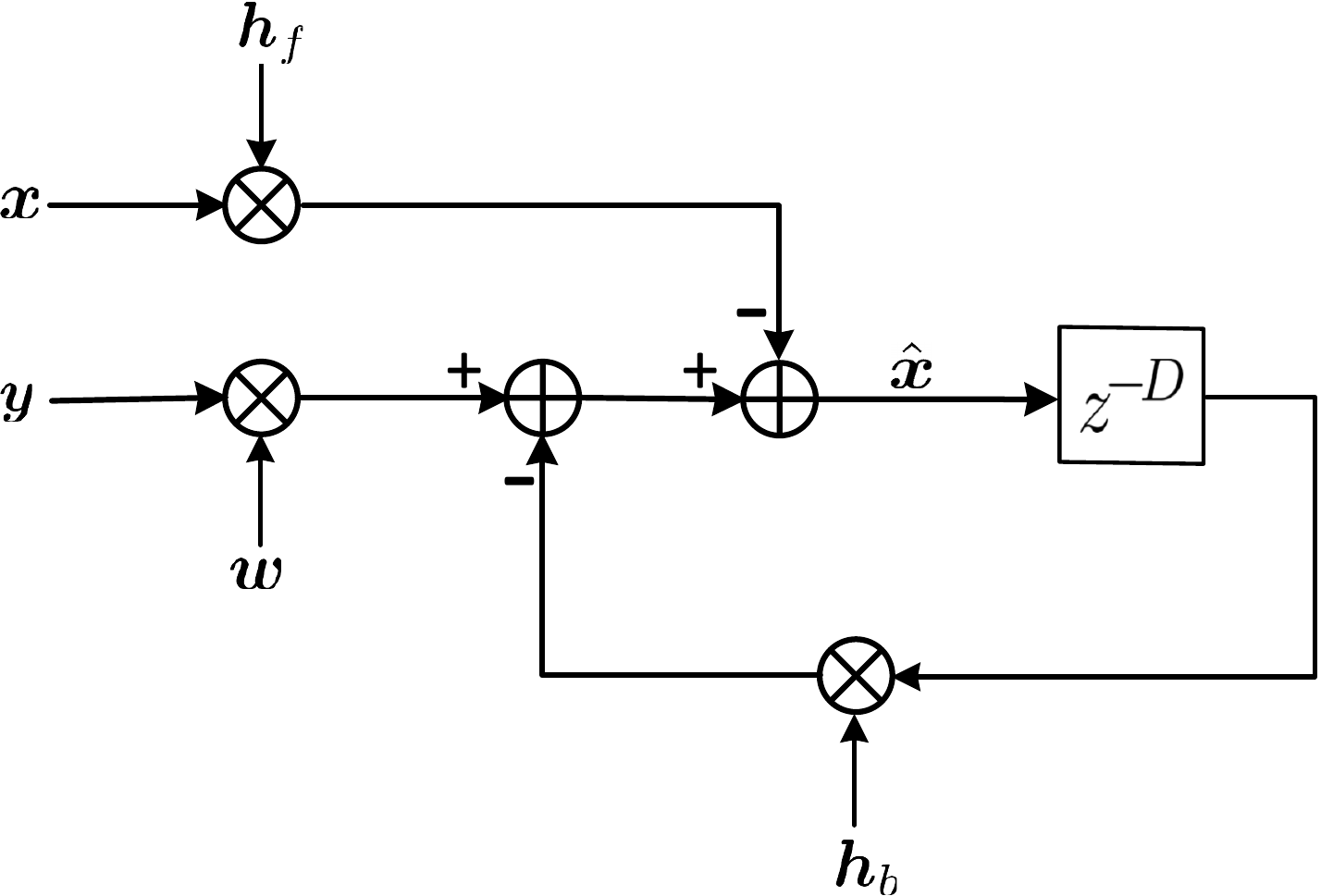}}
\vspace*{-2mm}
\caption{\label{fig2}The decision-feedback process in RS-SOVE. The hard feedback $\hat{\vec{x}}$ is associated to each state and updated along the detection stages.}
\end{center}
\vspace*{-12mm}
\end{figure}

\begin{figure*}[b]
\vspace{-6mm}
\hrulefill
\vspace{1mm}
{\setlength\arraycolsep{0pt} \bea \label{llr}  L(x_{k,n})&=&  \log\!\left( \sum_{x_{k,n}=1}\!\exp\left( \!-\alpha_{k+\nu-1}^i- \gamma_{k+\nu}^{i,j}-\beta_{k+\nu}^j\right)\!\right)\!\!- \log\!\left( \sum_{x_{k,n}=-1} \!\exp\left(\!-\alpha_{k+\nu-1}^i- \gamma_{k+\nu}^{i,j}-\beta_{k+\nu}^j\right)\!\right) \!\!\! \! \! \! \! \! \! \!  \notag\\
&\approx&\min_{x_{k,n}=-1}\left( \alpha_{k+\nu-1}^i+ \gamma_{k+\nu}^{i,j}+\beta_{k+\nu}^j\right)-\min_{x_{k,n}=1}\left(\alpha_{k+\nu-1}^i+ \gamma_{k+\nu}^{i,j}+\beta_{k+\nu}^j\right) .\eea}
\vspace{-6mm}
\end{figure*}

\subsection{RS-SOVE with Arbitrary Decision-Delay $D$}
In Fig. \ref{fig2}, we illustrate the decision-feedback detection in the RS-SOVE with the prefilter $\vec{w}$, the feedback filter $\vec{h}_{\mathrm{b}}$, and the target response $\vec{h}_{\mathrm{f}}$. Utilizing Jacobian approximation \cite{M05}, $$\log\big(\exp(-a)+\exp(-b)\big)\!\approx\!-\min(a,b),$$ the soft-decisions of the $n$th bit $x_{k,n}$ in $x_{k}$, i.e., the log-likelihood ratio (LLR), is calculated according to (\ref{llr}) with delay $\nu$. The forward path metric $\alpha_{k}^j$ corresponding to state $j$ at stage $k$ is recursively computed through
\bea \label{alpha} \alpha_{k}^j=\min_{i}\left\{ \alpha_{k-1}^i+\gamma_k^{i,j}\right\},\eea
where the branch metric $\gamma_k^{i,j}$ in (\ref{model3}) associated to state transition $i\!\to\! j$ is calculated as
 \bea \label{metric1} \gamma_k^{i,j}=\left|\tilde{y}_k-\sum_{\ell=0}^{\nu}\tilde{h}_{\ell} x_{k-\ell}-\sum_{\ell=\nu+1}^{L\!-1}\tilde{h}_{\ell} \hat{x}_{k-\ell}\right|^2.\eea
In (\ref{metric1}), the symbol vector $(x_k,\cdots,x_{k-\nu})$ are determined from state transition $i\!\to\! j$, while $(\hat{x}_{k-\nu-1},\cdots,\hat{x}_{k-L+1})$ are the hard decisions associated to each state $i$ at stage $k$. As for each state there is a survival path that leads to it, with decision-feedback determined from such a path, the feedback varies on different states. In addition, an update of all survival paths is needed along the detection stages.

In \cite{KB90}, with RS-SOVE the backward recursions are omitted by setting $\beta_{k+\nu}^j\!=\!0$ for all states, and the LLR in (\ref{llr}) is simplified to
 \bea  L(x_{k,n})\!\approx\!\min_{x_{k,n}=-1}\!\left(\! \alpha_{k+\nu-1}^i\!+\! \gamma_{k+\nu}^{i,j}\!\right)\!-\!\min_{x_{k,n}=1}\!\left(\!\alpha_{k+\nu-1}^i\!+\! \gamma_{k+\nu}^{i,j}\!\right)\!.\notag \eea
However, a drawback of such an approximation is that, the short decision delay $\nu$ in RS-SOVE limits its performance, especially with higher-order modulations and code-rates\cite{BF98}. Therefore, we increase the delay $\nu$ to an arbitrary value $D$ by initializing $\beta_{k+D}^j\!=\!0$ for all states at detection stage $k\!+\!D$, and define the backward recursion for state transition $j\!\to\! i$ as
 \bea \label{beta} \beta_{k-1}^i=\min_{j}\left\{ \beta_{k}^j+\gamma_k^{i,j}\right\}. \eea
Note that, from detection stage $k$ up to $k\!+\!\nu\!-\!1$, the state transactions corresponding to different symbol assumptions $x_k$ do not merge with each other at the same state (and on both directions). This is so, since the state transactions from stage $k$ to $k\!+\!\nu\!-\!1$ follow the below pattern $$\underbrace{\mathrm{o}\underbrace{\mathrm{xx}\cdots\mathrm{x}}_{\nu-1}\longrightarrow \mathrm{xo}\underbrace{\mathrm{xx}\cdots\mathrm{x}}_{\nu-2}\longrightarrow \cdots \underbrace{\mathrm{xx}\cdots\mathrm{x}}_{\nu-1}\mathrm{o}}_{\nu\;\mathrm{stages}},$$
where \lq\lq{}$\mathrm{o}$\rq\rq{} denotes the symbol assumption $x_k$ at stage $k$, and \lq\lq{}$\mathrm{x}$\rq\rq{} represents all the possible choices for the other $\nu\!-\!1$ symbols on each state. There all in total $|\mathcal{X}|$ possible assumptions for $x_k$, and with each assumption, the sub-trellises formed by the transition pattern above are non-intersecting within stage $k$ and $k\!+\!\nu\!-\!1$. Hence, by utilizing (\ref{alpha}) and (\ref{beta}), the minimal path metric of each symbol assumption $x_k$ in (\ref{llr}) can be recursively computed as
{\setlength\arraycolsep{2pt} \bea \label{llr2}  \min_{x_{k}}\left( \alpha_{k+\nu-1}^i+ \gamma_{k+\nu}^{i,j}+\beta_{k+\nu}^j\right)&=&\min_{x_{k}}\left( \alpha_{k+\nu-1}^i+ 
\beta_{k+\nu-1}^i\right) \notag \\
&=&\min_{x_{k}}\left( \alpha_{k}^i+\beta_{k}^i\right).\eea}
\hspace{-1.4mm}Then, for each bit assumption $x_{k,n}$ the minimal path metric is the minimum of all $|\mathcal{X}|/2$ symbols $x_k\in\mathcal{X}$ that the $n$th bit equals to such an assumption. Therefore, the LLR in (\ref{llr}) can be equivalently expressed as
{\setlength\arraycolsep{2pt} \bea \label{llr1}  L(x_{k,n})=\min_{x_{k,n}=-1}\left( \alpha_{k}^j+\beta_{k}^j\right)-\min_{x_{k,n}=1}\left(\alpha_{k}^j+\beta_{k}^j\right). \eea

In Fig. \ref{fig3}, we illustrate the forward and backward recursions in the RS-SOVE at detection stage $k$ with a binary trellis with $\nu\!=\!2$ and $D\!=\!4$. As can be seen, the state transactions represented by the red lines and blues lines (both solid and dashed lines) do not merge with each other at stage $k$ and $k\!+\!1$, and the recursion (\ref{llr2}) holds. The LLR calculation in (\ref{llr1}) shows that, with an arbitrary delay $D$ and branch metric computation in (\ref{metric1}), the RS-SOVE can be reviewed as an MLM equalizer, but with a full forward recursion and $D$-step back recursion at each stage. 

Next, we introduce the proposed optimal FOM shortener design that cooperates with decision-feedback in the RS-SOVE which has been introduced in this section.

\begin{figure*}[b]
\vspace{-2mm}
\centering
\hspace{-1mm}
\setlength\fboxsep{0pt}
\setlength\fboxrule{0pt}
\scalebox{0.36}{\includegraphics{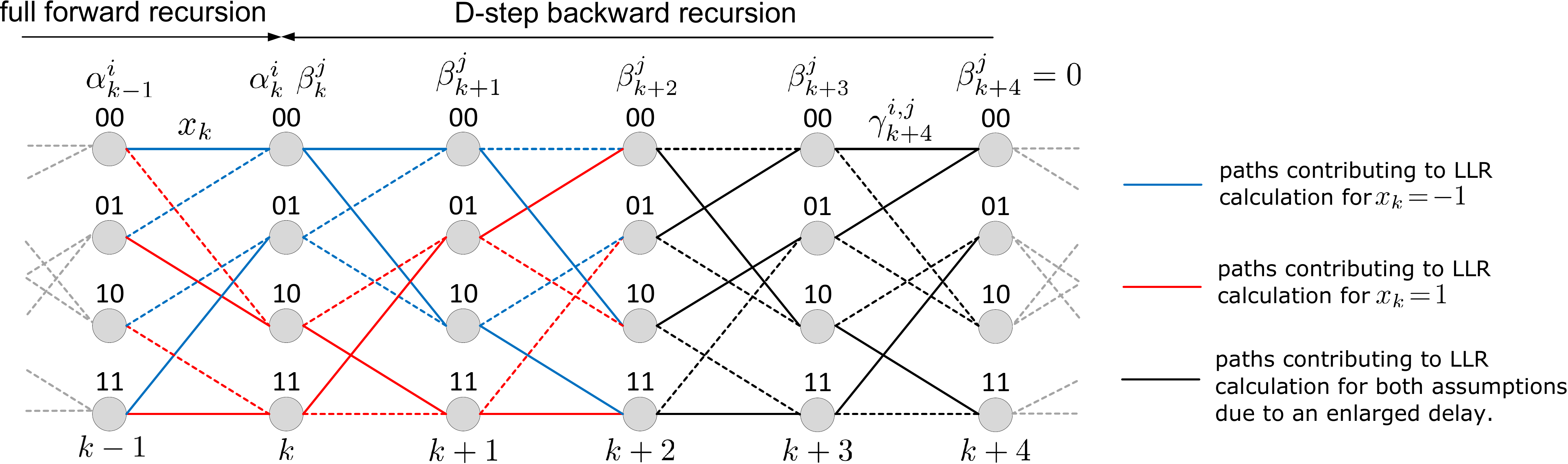}}
\vspace{-11mm}
\caption{A trellis diagram for binary-phase-shift-keying (BPSK) modulation with memory length $\nu\!=\!2$ and an enlarged delay $D\!=\!4$. In RS-SOVE, the decision-feedback is determined by the survival path on each state. The last ($L\!-\!\nu\!-\!1$) symbols associated to the survival path that leading to current state are preserved and updated along the detection stages. The dashed lines are the discarded paths in the forward and backward recursions due to the Jacobian approximation.}
\label{fig3}
\vspace{-4mm}
\end{figure*}

\section{The Optimal FOM Channel Shortener Design for RS-SOVE}
As the HOM shortener is a static and heuristic approach, it neither takes the noise power nor the quality of feedback $\hat{\vec{x}}$ into account when designing $\vec{w}_{\mathrm{hom}}$. Consequently, the detection performance is often inferior to the UBM shortener\cite{KZWR15}. Moreover, the UBM shortener also suffers from performance losses in middle and high SNR regimes. The reason is that, as mentioned earlier, the channel tails are truncated and the RS-SOVE does not cooperate with feedback. On the other hand, with high SNR the hard decisions are sufficiently good along the ML path, which can be exploited to cancel the signal part corresponding to the channel tails, instead of direct truncating which ends up with a transmission-energy loss.

Since we are dealing with ISI channels, the FOM receiver filters are designed assuming a large $K$, in which case we can let $\vec{H}$ represent the $K\!\times\!K$ circular convolution matrix instead of the normal convolution\footnote{Another conceptually simple way to interpret this is to replace the first $L\!-\!1$ symbols in $\vec{x}$ with its last $L\!-\!1$ symbols, i.e., inserting CP. But, here we make such an approximation on $\vec{H}$ is solely for the sake of designing optimal parameters of the channel shortener. We do not insert CP in the transmit blocks when evaluating the detection performance later.}. Such an approximation has no impact on the information rate as $K\!\to\!\infty$, see e.g., \cite{W88} for a rigorous information-theoretic treatment. From Szeg\"{o}'s eigenvalue distribution theorem\cite{Sze,MK}, the eigenvalues of Toeplitz matrices converge to the Fourier transforms of the sequences that they induce. This implies that, we can equivalently work with the Fourier transforms of all involved Toeplitz matrices, or the vectors that specify them. 

Denote the discrete-time Fourier transform (DTFT) of vector $\vec{h}$ and the inverse operation (IDTFT) as
{\setlength\arraycolsep{2pt} \bea \label{Hw}  H(\omega) &=&\sum_{\ell=0}^{L-1}h_{\ell}\exp(j\omega\ell), \\
  \label{idtft} h_{\ell}&=& \frac{1}{2\pi}\int_{-\pi}^{\pi} H(\omega)\exp(-j\omega\ell) \mathrm{d}\omega. \eea}
\hspace{-1.4mm}respectively. Next, we elaborate the optimal FOM shortener design. Although we adopt the same approach as MILB-maximization, the FOM shortener is different from the previous designs\cite{RP12, KZWR15}, which are based on Ungerbeock model and take no feedback into consideration. In \cite{FDG12}, the authors extend the UBM shortener to deal with soft feedback and with turbo iterations. However, with RS-SOVE, there are no turbo iterations and the UBM shortener is not applicable.

\subsection{The FOM Channel Shortener Design with Feedback}
Consider the Forney detection model with feedback,
  \bea  \label{md3} \tilde{p}(\vec{y}|\vec{x},\hat{\vec{x}})=\exp\big(-\left\|\vec{W}\vec{y}-\vec{F}\vec{x}-\vec{B}\hat{\vec{x}}\right\|^2\big), \eea
where $\vec{W}$, $\vec{F}$ and $\vec{B}$ are $K\!\times\! K$ convolution matrices generated from $\vec{w}$, $\vec{f}$ and $\vec{b}$, respectively, and $\hat{\vec{x}}$ is the feedback. There is no constraint on $\vec{w}$, and $\vec{f}$, $\vec{b}$ are as below\footnote{Although with arbitrary $\vec{w}$, the feedback filter $\vec{b}$ can be arbitrary long, we make such constraints to align the complexity of decision-feedback detection in the RS-SOVE corresponding to the HOM shortener.},
{\setlength\arraycolsep{2pt} \bea \label{f}  \vec{f} &=&\big(f_0,f_1,\cdots, f_{\nu}\big),\\
\label{b} \vec{b} &=&\big(\underbrace{0, \cdots,0}_{\nu+1}, b_{0}, b_{1}, \cdots, b_{L-\nu-2}\big). \eea}
\hspace{-1.4mm}The receiver filters ($\vec{w}, \vec{f}, \vec{b}$) are optimized through maximizing the MILB, which is defined as
\be \label{metric} I_{\mathrm{LB}} = \lim_{K\to\infty}\frac{1}{K}\Big(\mathbb{E}_{\vec{x},\vec{y}}\big[\ln \tilde{p}(\vec{y}|\vec{x},\hat{\vec{x}})\big]-\mathbb{E}_{\vec{y}}\big[\ln\tilde{p}(\vec{y}|\hat{\vec{x}})\big]\Big),\ee
where the expectations are taken over the true channel statistics\footnote{In order to obtain a tractable problem\cite{RP12}, we make the assumption that $\vec{x}$ comprises IID complex Gaussian variables when calculating $I_{\mathrm{LB}}$.} expression of $I_{\mathrm{LB}}$ and with $\tilde{p}(\vec{y}|\vec{x},\hat{\vec{x}})$ in (\ref{md3}),
\bea \tilde{p}(\vec{y}|\hat{\vec{x}})=\int_{\vec{x}}\tilde{p}(\vec{y}|\vec{x},\hat{\vec{x}})p(\vec{x})\mathrm{d}\vec{x}. \notag \eea

The quality of feedback $\hat{\vec{x}}$, which impacts the rate $I_{\mathrm{LB}}$, is measured by two parameters, 
{\setlength\arraycolsep{2pt} \bea \eta&=&\frac{1}{K}\mathbb{E}[\hat{\vec{x}}\hat{\vec{x}}^\dag],  \notag \\
\label{sigma} \sigma&=& \frac{1}{K}\mathbb{E}[\hat{\vec{x}}\vec{x}^\dag]. \eea }
\hspace{-1.8mm}In RS-SOVE, $\hat{\vec{x}}$ are hard symbols and we have $\eta\!=\!1$. With soft symbols feedback, $\eta$ can be calculated from the variance of the estimates, i.e., $\eta\!=\!1-\mathrm{var}({\hat{\vec{x}}})$. With optimal ($\vec{w}$, $\vec{f}$, $\vec{b}$), and denoting $\tilde{\vec{y}}$ as the received samples after filtering by $\vec{w}$, the branch metric $\gamma_k^{i,j}$ in (\ref{metric1}) is calculated as
 \bea \label{metric2} \gamma_k^{i,j}=\left|\tilde{y}_k-\sum_{\ell=0}^{\nu}f_{\ell} x_{k-\ell}-\sum_{\ell=0}^{L\!-\nu\!-2}b_{\ell} \hat{x}_{k-\ell-\nu-1}\right|^2.\eea

Before optimizing ($\vec{w}$, $\vec{f}$, $\vec{b}$), we introduce the following notations. Following (\ref{Hw}), we denote the DTFT of $\vec{w}$, $\vec{f}$, and $\vec{b}$ as $W(\omega)$, $F(\omega)$, and $B(\omega)$, respectively. Then, we let
{\setlength\arraycolsep{2pt} \bea \label{mw} M(\omega)&=&-\frac{N_0}{N_0+|H(\omega)|^2},  \\
 \label{mtw} \tilde{M}(\omega)&=&\sigma^{2}\left(1+M(\omega)\right)-\sigma, \eea}
\hspace*{-1.2mm}and
 \bea \label{phi} \vec{\phi}(\omega)=\big[\exp\left(j\omega(\nu\!+\!1)\right)\; \exp\left(j\omega(\nu\!+\!2)\right)\; \ldots \; \exp\left(j\omega(L\!-\!1)\right)\big]\rmt\!. \quad\eea}

Further, denote $(L\!-\!\nu\!-\!1)\!\times\!1$ vector $\vec{\varepsilon}_1$, and $(L\!-\!\nu\!-\!1)\!\times\!(L\!-\!\nu\!-\!1)$ Hermitian matrix $\vec{\varepsilon}_2$ as
{\setlength\arraycolsep{2pt} \bea \label{vareps1} \vec{\varepsilon}_1&=&\frac{\sigma}{2\pi}\int_{-\pi}^{\pi}M(\omega)F^{\ast}(\omega)\vec{\phi}(\omega)\mathrm{d}\omega,  \\
\label{vareps2} \vec{\varepsilon}_2&=&\frac{1}{2\pi} \int_{-\pi}^{\pi}\frac{\tilde{M}(\omega)|F(\omega)|^2\vec{\phi}(\omega)\vec{\phi}(\omega)\rmh}{1+|F(\omega)|^2}\mathrm{d}\omega.\eea}
\hspace*{-1.4mm}With definitions in (\ref{mw})-(\ref{vareps2}), we have the below lemma that states the closed-form MILB.
\begin{lemma}
The MILB in (\ref{metric}) equals
\bea \label{ibarm1} I_{\mathrm{LB}}\!
\!&=&\!\!\frac{1}{2\pi}\! \int_{-\pi}^{\pi}\!\!\bigg(\! \log \!\big(1\!+\! |F(\omega)|^2\big)\!-\! |F(\omega)|^2 \!-\!\frac{L(\omega)}{1\!+\!|F(\omega)|^2}\! \bigg)\mathrm{d}\omega \nonumber \\
&&\!\!\!+\frac{1}{\pi}\!\int_{-\pi}^{\pi}\mathcal{R}\big\{F^{\ast}(\omega)\big(W(\omega)H(\omega)\!-\!\sigma B(\omega)\big)\big\}\mathrm{d}\omega,
\eea
where
{\setlength\arraycolsep{2pt} \bea L(\omega)&=&|F(\omega)W(\omega)|^{2}\big(N_0+|H(\omega)|^2\big)+\sigma|F(\omega)B(\omega)|^2\notag \\
&&\!-\!  2\sigma|F(\omega)|^{2}\mathcal{R}\!\left\{H(\omega)W(\omega)B^{\ast}(\omega) \right\} \notag.\eea}
\end{lemma}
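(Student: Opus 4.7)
The plan is to evaluate the two expectations in the definition (\ref{metric}) separately as normalized traces of circulant matrix products, and then pass to the limit $K\!\to\!\infty$ via the Szeg\"o eigenvalue argument mentioned just before the lemma.

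First I would expand $\ln \tilde p(\vec y|\vec x,\hat{\vec x}) = -\|\vec W\vec y - \vec F \vec x - \vec B \hat{\vec x}\|^2$ and substitute the true channel law $\vec y = \vec H \vec x + \vec n$. Using an IID assumption so that $\mathbb{E}[\vec x\vec x\rmh]=\vec I$, $\mathbb{E}[\vec n\vec n\rmh]=N_0\vec I$, $\mathbb{E}[\hat{\vec x}\vec x\rmh]=\sigma\vec I$, $\mathbb{E}[\hat{\vec x}\hat{\vec x}\rmh]=\eta\vec I$, together with the independence of $\vec n$ from $(\vec x,\hat{\vec x})$, the first expectation $\mathbb{E}_{\vec x,\vec y}[\ln\tilde p(\vec y|\vec x,\hat{\vec x})]$ reduces to a sum of traces of products of the circulant matrices generated by $\vec w$, $\vec f$, $\vec b$ and $\vec h$.

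Second I would compute the marginal $\tilde p(\vec y|\hat{\vec x}) = \int \tilde p(\vec y|\vec x,\hat{\vec x}) p(\vec x)\,d\vec x$ under the working assumption $\vec x\sim\mathcal{CN}(\vec 0,\vec I)$. Setting $\vec z = \vec W \vec y - \vec B\hat{\vec x}$ and completing the square in $\vec x$ gives a Gaussian integrand in $\vec x$ with precision $\vec I + \vec F\rmh\vec F$ and a residual quadratic form in $\vec z$ with kernel $\vec I - \vec F(\vec F\rmh\vec F + \vec I)^{-1}\vec F\rmh$; applying the matrix-inversion identity $\vec I - \vec F(\vec F\rmh\vec F + \vec I)^{-1}\vec F\rmh = (\vec I + \vec F \vec F\rmh)^{-1}$ yields
\begin{equation*}
\ln\tilde p(\vec y|\hat{\vec x}) = -\log\det(\vec I+\vec F\rmh\vec F) - \vec z\rmh(\vec I + \vec F\vec F\rmh)^{-1}\vec z + \mathrm{const}.
\end{equation*}
Taking the expectation of this expression requires only $\mathbb{E}[\vec y\vec y\rmh] = \vec H\vec H\rmh + N_0\vec I$ and $\mathbb{E}[\hat{\vec x}\vec y\rmh] = \sigma \vec H\rmh$ in addition to the correlations already listed.

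Finally I would pass to the $K\!\to\!\infty$ limit. Under the circular-convolution approximation, $\vec W$, $\vec F$, $\vec B$, $\vec H$ are simultaneously diagonalized by the DFT with eigenvalues $W(\omega_k)$, $F(\omega_k)$, $B(\omega_k)$, $H(\omega_k)$, so by Szeg\"o's theorem each $K^{-1}\mathrm{tr}[\cdot]$ becomes a Riemann sum converging to $\frac{1}{2\pi}\int_{-\pi}^{\pi}(\cdot)\,d\omega$ of the appropriate product of DTFT eigenvalues, while $K^{-1}\log\det(\vec I + \vec F\rmh\vec F)\to\frac{1}{2\pi}\int\log(1+|F(\omega)|^2)\,d\omega$. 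Using $|F|^2/(1+|F|^2) = 1 - 1/(1+|F|^2)$ to merge the quadratic contributions from the two expectations, the terms in $|W|^2$, $|B|^2$ and $\mathcal R\{HWB^*\}$ collapse into $L(\omega)/(1+|F(\omega)|^2)$, a stray $-|F(\omega)|^2$ survives from $-|WH-F|^2$, and the terms linear in $F$ aggregate into $2\mathcal R\{F^*(WH - \sigma B)\}$, matching (\ref{ibarm1}). The main obstacle is the second step: completing the square correctly and then keeping precise track of every cross term between $\vec y$, $\hat{\vec x}$, and $\vec n$ in the expectation of $-\vec z\rmh(\vec I+\vec F\vec F\rmh)^{-1}\vec z$, since any bookkeeping slip there shifts the coefficients inside $L(\omega)$ and breaks the clean decomposition above.
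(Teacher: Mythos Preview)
Your approach is correct and considerably more explicit than the paper's own proof, which simply invokes the generalized mutual information expression derived in \cite{HR151} for a generic $K\times K$ linear MIMO channel, notes that $I_{\mathrm{LB}}=\lim_{K\to\infty}K^{-1}I_{\mathrm{GMI}}$ in the ISI case, and then applies Szeg\"o's theorem. Your three steps---evaluating $\mathbb{E}[\ln\tilde p(\vec y|\vec x,\hat{\vec x})]$ by expanding the quadratic form, computing $\tilde p(\vec y|\hat{\vec x})$ via a Gaussian integral in $\vec x$ (completing the square and using the push-through identity $\vec I-\vec F(\vec F\rmh\vec F+\vec I)^{-1}\vec F\rmh=(\vec I+\vec F\vec F\rmh)^{-1}$), and passing to the frequency domain through Szeg\"o---are precisely the computation that underlies the cited $I_{\mathrm{GMI}}$ formula, so the two arguments coincide once the reference is unpacked; yours simply has the advantage of being self-contained.

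One caution on the bookkeeping you flag at the end: if you carry $\mathbb{E}[\hat{\vec x}\hat{\vec x}\rmh]=\eta\vec I$ through as an independent parameter, the spectrum of $\mathbb{E}[\vec z\vec z\rmh]$ contributes $\eta|B(\omega)|^2$, and hence your recipe produces $\eta|F(\omega)B(\omega)|^2$ inside $L(\omega)$ rather than the $\sigma|F(\omega)B(\omega)|^2$ appearing in the lemma. The stated form implicitly uses the soft-feedback convention $\eta=\sigma$ (as for an MMSE-type estimate satisfying $\mathbb{E}[\hat{\vec x}\vec x\rmh]=\mathbb{E}[\hat{\vec x}\hat{\vec x}\rmh]$) inherited from the GMI framework in \cite{HR151}; make that identification explicit when you write the argument out, or you will not reproduce (\ref{ibarm1}) verbatim.
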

\begin{proof}
In \cite[eq.(5)-(6)]{HR151}, the generalized mutual information $I_{\mathrm{GMI}}$ is derived for any $K\!\times\!K$ linear multi-input and multi-output (MIMO) channel. For ISI channels, which can be viewed as special cases of MIMO channel, it holds that $I_{\mathrm{LB}}\!=\!\lim\limits_{K\to\infty}\frac{1}{K}I_{\mathrm{GMI}}$. By applying Szeg\"{o}'s theorem and after some manipulations, (\ref{ibarm1}) follows.
\end{proof}

With $I_{\mathrm{LB}}$ stated in (\ref{ibarm1}), the optimal $W(\omega)$ and $B(\omega)$ that maximize $I_{\mathrm{LB}}$ are in Theorem 1.
\begin{theorem} \label{thm1}
The optimal $W(\omega)$ that maximizes $I_{\mathrm{LB}}$ equals,
\bea \label{optw} W_{\mathrm{opt}}(\omega)=\frac{H^{\ast}(\omega)\big(1+ |F(\omega)|^2+\sigma F(\omega)B^{\ast}_{\mathrm{opt}}(\omega)\big)}{F^{\ast}(\omega)(N_0+|H(\omega)|^2)},\eea
and when $\sigma>0$, the optimal $B(\omega)$ reads,
\bea \label{optb} B_{\mathrm{opt}}(\omega)=-\vec{\varepsilon}_1\rmh\vec{\varepsilon}_2^{-\!1}\vec{\phi}(\omega).\eea
With $W_{\mathrm{opt}}(\omega)$ and $B_{\mathrm{opt}}(\omega)$, $I_{\mathrm{LB}}$ equals,
{\setlength\arraycolsep{2pt} \bea \label{Ilb}I_{\mathrm{LB}}=\left\{\begin{array}{ll} \mathcal{J}\big(F(\omega)\big),& \sigma=0,\\
\mathcal{J}\big(F(\omega)\big)-\vec{\varepsilon}_1\rmh\vec{\varepsilon}_2^{-1}\vec{\varepsilon}_1, &0<\sigma\leq1, \end{array}\right.\eea}
\hspace*{-1.4mm}where $\mathcal{J}\big(F(\omega)\big)$ reads
 \bea \label{jw} \mathcal{J}\big(F(\omega)\big)= 1+\frac{1}{2\pi}\int_{-\pi}^{\pi}\!\!\Big(\log\big(1+|F(\omega)|^2\big) +M(\omega)\big(1+|F(\omega)|^2\big) \Big)\mathrm{d}\omega. \; \eea
\end{theorem}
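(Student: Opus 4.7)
The plan exploits that $W(\omega)$ is an unconstrained function on $[-\pi,\pi]$ while $B(\omega)=\vec{\phi}(\omega)^T\vec{b}$ lives in the $(L\!-\!\nu\!-\!1)$-dimensional subspace prescribed by (\ref{b}). I therefore maximize pointwise over $W$ first, then reduce what remains to a finite-dimensional Hermitian quadratic in $\vec{b}$. Inspecting the integrand of (\ref{ibarm1}), its $W(\omega)$-dependence at each frequency is a strictly concave quadratic with leading coefficient $-|F|^2(N_0+|H|^2)/(1+|F|^2)<0$ and an affine part obtained by combining $-2\sigma|F|^2\mathcal{R}\{HWB^*\}/(1+|F|^2)$ from $L(\omega)$ with $2\mathcal{R}\{F^*WH\}$ from the second integral. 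Setting the Wirtinger derivative $\partial/\partial W^*(\omega)=0$ (equivalently, completing the square in $W(\omega)$) then delivers $W_{\mathrm{opt}}(\omega)$ in the form of (\ref{optw}); concavity makes it the unique maximizer.

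Substituting $W_{\mathrm{opt}}$ back, I rewrite the $W$-free integrand using the identity $|H|^2/(N_0+|H|^2)=1+M(\omega)$ from (\ref{mw}). The target reduction is the following: the $B$-independent remainder should collapse exactly to the integrand of $\mathcal{J}(F)$ in (\ref{jw}); the linear-in-$B$ terms (the original $-2\sigma\mathcal{R}\{F^*B\}$ together with the cross-term inherited from $|W_{\mathrm{opt}}H|^2$) should combine into $2\sigma M(\omega)\mathcal{R}\{F^*B\}$; and the quadratic-in-$B$ terms ($\sigma|FB|^2/(1+|F|^2)$ together with the $|B|^2$-component of $|W_{\mathrm{opt}}|^2$) should collapse into $\tilde M(\omega)|F|^2|B|^2/(1+|F|^2)$ with $\tilde M$ as in (\ref{mtw}). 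After this reorganization, substituting the ansatz $B(\omega)=\vec{\phi}(\omega)^T\vec{b}$ turns the two remaining frequency-domain integrals into exactly the vector $\vec{\varepsilon}_1$ and the Hermitian matrix $\vec{\varepsilon}_2$ defined in (\ref{vareps1})-(\ref{vareps2}), so $I_{\mathrm{LB}}|_{W=W_{\mathrm{opt}}}$ becomes $\mathcal{J}(F)$ plus a concave Hermitian quadratic in $\vec{b}$.

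When $\sigma>0$, the scalar $\tilde M(\omega)=\sigma^2(1+M(\omega))-\sigma$ is nonpositive (since $\sigma\le 1$ and $1+M(\omega)\le 1$), so $\vec{\varepsilon}_2$ is negative definite whenever $F$ is nontrivial and the quadratic has a unique maximizer. Setting its $\vec{b}^*$-gradient to zero gives the optimal coefficient vector $-\vec{\varepsilon}_2^{-1}\vec{\varepsilon}_1$, which through the identity $B(\omega)=\vec{\phi}(\omega)^T\vec{b}$ and the Hermitian symmetry of $\vec{\varepsilon}_2$ reproduces (\ref{optb}); the maximum value is $\mathcal{J}(F)-\vec{\varepsilon}_1^\dag\vec{\varepsilon}_2^{-1}\vec{\varepsilon}_1$, matching (\ref{Ilb}). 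The case $\sigma=0$ is immediate because $\vec{\varepsilon}_1=\vec{0}$ and the entire $B$-dependence drops out, so $I_{\mathrm{LB}}=\mathcal{J}(F)$ for every admissible $B$.

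The principal obstacle is the simplification in the second paragraph: after substituting $W_{\mathrm{opt}}$, several cross-terms between $|F|^2$, $|W_{\mathrm{opt}}|^2$, and the $\sigma$-weighted $B$-terms inside $L(\omega)$ must be carefully rearranged so that the coefficients of the linear and quadratic $B$-contributions collapse exactly to $2\sigma M$ and $\tilde M|F|^2/(1+|F|^2)$, respectively; only with this clean collapse do the remaining integrals reproduce $\vec{\varepsilon}_1$ and $\vec{\varepsilon}_2$ in (\ref{vareps1})-(\ref{vareps2}), yielding the stated closed-form $B_{\mathrm{opt}}$. Once this algebraic reduction is in place, the rest is routine Wirtinger calculus on a concave Hermitian quadratic form.
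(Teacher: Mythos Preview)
Your proposal is correct and follows essentially the same route as the paper's proof in Appendix~A: optimize over the unconstrained prefilter first (the paper differentiates with respect to the Fourier coefficients $w_k$, you take the pointwise Wirtinger derivative in $W(\omega)$, which is equivalent), substitute $W_{\mathrm{opt}}$ back to obtain exactly the reduced expression (\ref{ibarm1optw}), and then recognize the remaining $B$-dependence as the finite-dimensional Hermitian quadratic (\ref{bw1}) in $\vec{b}$, whose stationary point yields (\ref{optb}) and (\ref{Ilb}). Your additional remarks on strict concavity in $W$ and on the nonpositivity of $\tilde{M}(\omega)$ (hence negative definiteness of $\vec{\varepsilon}_2$) supply second-order justification that the paper leaves implicit, but they do not change the argument's structure.
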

\begin{proof}
See Appendix A.
\end{proof}
In (\ref{Ilb}), the term $-\vec{\varepsilon}_1\rmh\vec{\varepsilon}_2^{-1}\vec{\varepsilon}_1$ is the information rate increment due to the feedback $\hat{\vec{x}}$. From Theorem 1, $W(\omega)$, $B(\omega)$ are in closed forms, and $\vec{w}$, $\vec{b}$ can be obtained through IDTFT operations. But for $F(\omega)$ and $\vec{f}$, a closed form solution can not be reached. Hence, we use a gradient-ascending based optimization, with the updating at each iteration defined as
\bea \label{ga} \vec{f}^{i}=\vec{f}^{i-1}+\nabla_{\vec{f}^{\ast}} I_{\mathrm{LB}}.\eea
As the DTFT of $\vec{f}$ reads
\bea F(\omega)=\sum_{k=0}^{\nu}f_k\exp\!\big(j k\omega \big), \notag\eea
the first-order derivatives of $\mathcal{J}(F(\omega))$ and $\vec{\varepsilon}_1\rmh\vec{\varepsilon}_2^{-1}\vec{\varepsilon}_1$ in (\ref{Ilb}) with respect to $f_k$ read
{\setlength\arraycolsep{1pt}  \bea &&\frac{\partial \mathcal{J}}{\partial f_k}=\frac{1}{2\pi} \!\int_{-\pi}^{\pi}\!\!\bigg(M(\omega)\!+\!\frac{1}{1\!+\!|F(\omega)|^2}\bigg)F^{\ast}(\omega)\exp\!\big(jk\omega \big)\mathrm{d}\omega, \notag \\
 &&\frac{\partial \vec{\varepsilon}_1\rmh\vec{\varepsilon}_2^{-1}\vec{\varepsilon}_1}{\partial f_k}=-\frac{\partial \vec{\varepsilon}_1\rmh}{\partial f_k}\vec{\varepsilon}_2^{\!-\!1}\vec{\varepsilon}_1\!+\! \vec{\varepsilon}_1\rmh\vec{\varepsilon}_2^{\!-\!1}\frac{\partial \vec{\varepsilon}_2}{\partial f_k}\vec{\varepsilon}_2^{\!-\!1}\vec{\varepsilon}_1, \notag
 \eea}
 \hspace*{-1.4mm}respectively, and
 {\setlength\arraycolsep{1pt}
  \bea  \frac{\partial \vec{\varepsilon}_1\rmh}{\partial f_k}&=&\frac{\sigma}{2\pi}\!\int_{-\pi}^{\pi}\!M(\omega)\vec{\phi}(\omega)\rmh\exp\!\left(j k\omega \right)\!\mathrm{d}\omega, \notag \\
  \frac{\partial \vec{\varepsilon}_2}{\partial f_k}&=&\frac{1}{2\pi}\! \int_{-\pi}^{\pi}\!\frac{\tilde{M}(\omega)|F(\omega)|^2\vec{\phi}(\omega)\vec{\phi}(\omega)\rmh}{\big(1\!+\!|F(\omega)|^2\big)^2}F^{\ast}(\omega)\exp\!\big(jk\omega \big)\mathrm{d}\omega. \notag\eea}
\;\;Although due to the non-concaveness of $I_{\mathrm{LB}}$ in (\ref{Ilb}), the optimization may converge to a local maximum, such an optimization over $\vec{f}$ is still meaningful, in the sense that the MILB is increased even with a local maximum attained. We initialize $\vec{f}$ in (\ref{ga}) with $\vec{h}_{\mathrm{f}}$ obtained from the HOM shortener. When $N_0$ decreases and with $\sigma\!=\!1$, such an initialization is asymptotically close to the maximum point as the HOM shortener performs close the the FOM shortener, due to the perfect feedback. 

\subsection{The UBM Channel Shortener Design without Feedback}
Next, we introduce the UBM shortener design. By replacing $\vec{V}\!=\!\vec{F}\rmh\vec{W}$, $\vec{R}\!=\!\vec{F}\rmh\vec{B}$ and $\vec{G}\!=\!\vec{F}\rmh\vec{F}$, the model (\ref{md3}) can be rewritten as
\bea \label{UB1}  \tilde{p} (\vec{y}|\vec{x})=\exp\!\Big(2\mathcal{R}\{\vec{x}\rmh(\vec{V}\vec{y}-\vec{R}\hat{\vec{x}})\}-\vec{x}\rmh\vec{G}\vec{x}+\vec{\vartheta}\Big),
\eea
where $\vec{\vartheta}\!=\!-\|\vec{W}\vec{y}-\vec{B}\hat{\vec{x}}\|^2$. In the design of the UBM shortener, $\vec{G}$ is an arbitrary Hermitian matrix and can be non-positive definite\cite{RP12}. In the RS-SOVE, the term $\vec{\vartheta}$ is calculated with the survival path on each state. In order to calculate $\vec{\vartheta}$, we need to decompose $\vec{G}\!=\!\vec{F}\rmh\vec{F}$, which requires $\vec{G}$ to be positive definite. In such a case, (\ref{UB1}) is identical to (\ref{md3}), that is, the UBM shortener becomes the FOM shortener. This dilemma makes the Ungerboeck model not suitable for decision-feedback detection. But with turbo iterations, as $\hat{\vec{x}}$ is known before the RS-SOVE, it is the same for all states and $\vec{\vartheta}$ can be removed from (\ref{UB1}). However, as we are designing a channel shortener with no turbo iterations, we assume no feedback and (\ref{UB1}) changes to
\bea \label{UB2}   \tilde{p} (\vec{y}|\vec{x})=\exp\!\Big(2\mathcal{R}\{\vec{x}\rmh\vec{V}\vec{y}\}-\vec{x}\rmh\vec{G}\vec{x}\Big).\eea
The $K\!\times\! K$ convolution matrix $\vec{V}$ generated from vector $\vec{v}$ has the same structure as $\vec{W}$, while the $K\!\times\! K$ Toeplitz matrix $\vec{G}$ is Hermitian and band-shaped, with only the middle $2\nu\!+\!1$ diagonals can take non-zero values. Denote the vector comprises the first ($\nu\!+\!1$) elements in the first column of $\vec{G}$ as
\bea \vec{g}=\big(g_0, g_1, \cdots, g_{\nu}\big). \notag \eea 
With optimal ($\vec{v}$, $\vec{g}$), and denoting $\tilde{\vec{y}}$ as the received samples after filtering by $\vec{v}$, the branch metric $\gamma_k^{i,j}$ is calculated as
 \bea \label{metric3}  \gamma_k^{i,j}=g_0 |x_k|^2 -2\mathcal{R}\bigg\{x_k^{\ast}\Big(\tilde{y}_k-\sum_{\ell=1}^{\nu}g_{\ell} x_{k-\ell}\Big)\bigg\}.\eea
The model (\ref{UB2}) has been considered in earlier literatures such as \cite{RP12, H16}. The optimal solutions of ($\vec{v}$, $\vec{g}$) can be found in \cite{RP12}, which can also be deduced from Theorem 1 directly. By setting $\sigma\!=\!0$, the optimal $V(\omega)$ for (\ref{UB2}) reads
\bea  \label{optisiv} V_{\mathrm{opt}}(\omega)=\frac{H^{\ast}(\omega)}{N_0+|H(\omega)|^2}\big(1+G(\omega)\big), \eea
and the optimal $G(\omega)$ is the unique solution that maximizes $I_{\mathrm{LB}}$ in (\ref{metric}), which is evaluated based on (\ref{UB2}) and equals
 \be \label{Ilb2} I_{\mathrm{LB}}= 1+\frac{1}{2\pi}\int_{-\pi}^{\pi}\Big(\log\big(1+G(\omega)\big)+M(\omega)\big(1+G(\omega)\big) \! \Big)\mathrm{d}\omega.  \ee

Comparing (\ref{Ilb2}) to (\ref{jw}), the only difference is that $|F(\omega)|^2$ in (\ref{jw}) is replaced by $G(\omega)$. Therefore, the UBM shortener is more general than the FOM shortener under the case that $\sigma\!=\!0$. We point out the fact that, both the FOM and UBM shorteners are invariant under the minimum-phase transforming of the original channel $\vec{h}$. This is because, the homomorphic filter $\vec{w}_{\mathrm{hom}}$ is an all-pass filter, which has no impact on the noise statistical properties, and then the convolution matrix generated from the all-pass filter will be absorbed by the prefilters $\vec{W}$ and $\vec{V}$, respectively. Hence, with the FOM and UBM shorteners, it is no need to transform $\vec{h}$ into an minimum-phase equivalent form prior to prefiltering.

\subsection{Design the Optimal $\sigma$ for the FOM Channel Shortener}
In Theorem 1, the optimal $W(\omega)$ and $F(\omega)$ are related to the feedback quality parameter $\sigma$. However, according to the expectation in (\ref{sigma}), $\sigma$ is hard to find at the design stage. Moreover, it is not necessarily optimal to use the $\sigma$ calculated with (\ref{sigma}). Therefore, it is a free optimization parameter. In the next, we analyze the optimal design of $\sigma$. 

With higher-order modulations, we assume that when a symbol error occurs on the ML path, the hard decision $\hat{x}_k$ and the transmit symbol $x_k$ are independent. Then,
{\setlength\arraycolsep{2pt}  \bea \label{sigma1} \sigma&=&(1-P_{\mathrm{e}})\cdot\mathbb{E}[|x_k|^2]+P_{\mathrm{e}}\cdot\mathbb{E}[\hat{x}_k x_k^{\ast}],  \notag \\
&\approx&1-P_{\mathrm{e}},\eea}
\hspace{-2mm}where $P_{\mathrm{e}}$ is the symbol error rate (SER) of the RS-SOVE. As LMMSE detection is a special case of MILB detection with $\nu\!=\!0$, when $\nu\!>\!0$, the FOM shortener with the RS-SOVE outperforms the LMMSE detector and renders a lower SER \cite{HR151}. That is, denoting $\hat{\vec{x}}^{\mathrm{LMMSE}}$ as the LMMSE estimate and $P_{\mathrm{e}}^{\mathrm{LMMSE}}$ as the corresponding SER, it holds that
{\setlength\arraycolsep{2pt}   \bea  \label{pe}  P_{\mathrm{e}}&\leq& P_{\mathrm{e}}^{\mathrm{LMMSE}} \notag \\
&\stackrel{(a)}{\leq}& \label{sigma0} \mathbb{E}\left[\left(\vec{x}-\hat{\vec{x}}^{\mathrm{LMMSE}}\right)\left(\vec{x}-\hat{\vec{x}}^{\mathrm{LMMSE}}\right)^\dag\right] \Big/2\notag  \\
&\stackrel{(b)}{=}&\delta_{\mathrm{mse}}/2.  \eea}
\hspace{-1.2mm}where
\bea \delta_{\mathrm{mse}}=-\frac{1}{2\pi}\int_{-\pi}^{\pi}M(\omega)\mathrm{d}\omega,  \notag\eea 
which is the MSE of the LMMSE estimate. The inequality (a) is proved in Appendix B, and the equality (b) is from Szeg\"{o}'s eigenvalue distribution theorem. Hence, from (\ref{sigma1}) and (\ref{pe}),
\bea \label{sigma2} \sigma\geq 1-\delta_{\mathrm{mse}}/2.\eea 
The inequality (\ref{sigma2}) provides some insight about designing $\sigma$ for the FOM shortener. As we are expecting that, the RS-SOVE with decision-feedback shall outperform itself without feedback, i.e., $\sigma\!=\!0$, the input $\sigma$ to design $(\vec{w}, \vec{f}, \vec{b})$ should be set to, at least larger than $1\!-\!\delta_{\mathrm{mse}}/2$. Hence, when $\delta_{\mathrm{mse}}$ is small, we can let $\sigma\!=\!1$. 

As we will show next through empirical results, the optimal $\sigma$ can be chosen as either 0 or 1. The reason behind this phenomenon is that, when SNR is low, implying that the quality of $\hat{\vec{x}}$ in the RS-SOVE is fairly poor, it is better to truncate the channel tails to prevent error-propagation. However, when SNR is above a certain threshold, the feedback quality improves and the FOM shortener will benefit from $\hat{\vec{x}}$, in which case we can let $\sigma\!=\!1$. The test set-up for exploiting the relationships between the input and output $\sigma$ is depicted in Fig. {\ref{fig4}. We use Monte Carlo simulations under the below two standard ISI channels.

\begin{example}EPR-4\cite{epr4}. The 4-tap ISI channel, $\vec{h}\!=\![\;0.5\;\; 0.5\; -0.5\; -0.5\;]$.\end{example}
\begin{example}Proakis-C\cite{prokc}. The 5-tap ISI channel, $\vec{h}\!=\![\;0.227\;\; 0.46\;\; 0.688\;\; 0.46\;\; 0.227\;]$.\end{example}

\begin{figure}
\begin{center}
\hspace*{-0mm}
\scalebox{0.52}{\includegraphics{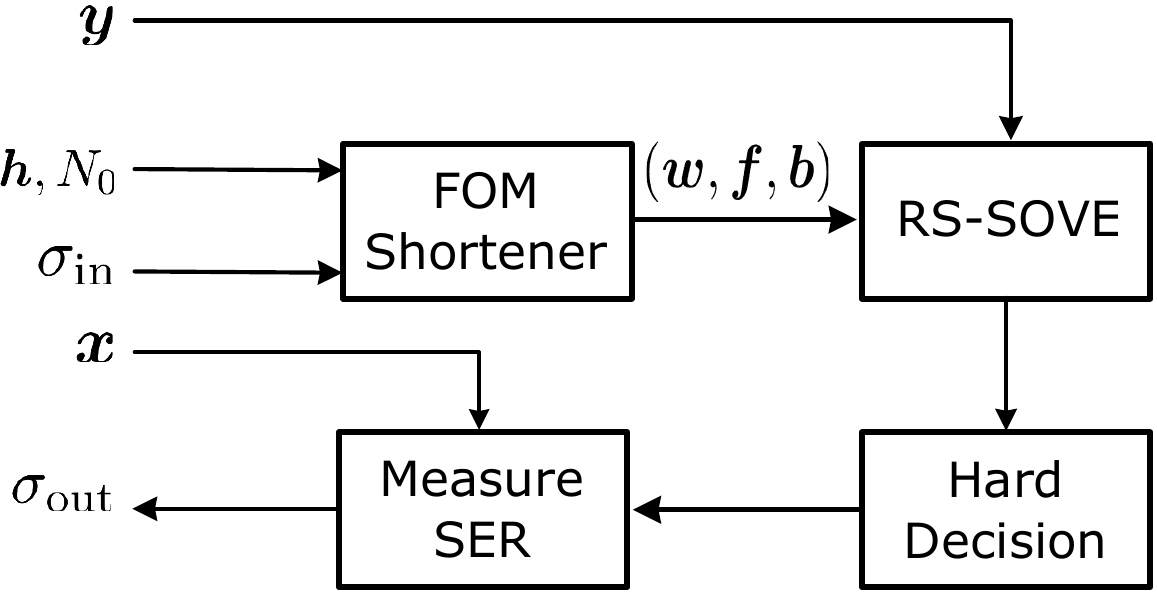}}
\vspace*{-4mm}
\caption{\label{fig4} The diagram of evaluating the optimal $\sigma$. For each input $\sigma_{\mathrm{in}}$, we calculate the optimal ($\vec{w}$, $\vec{f}$, $\vec{b}$) based on Theorem 1, and run the FOM shortener with RS-SOVE. The output $\sigma\!=\!1\!-\!P_{\mathrm{e}}$ is measured based on the hard decisions output from the RS-SOVE equalizer.}
\end{center}
\vspace*{-12mm}
\end{figure}

From Theorem 1, an input $\sigma$ determines the optimal channel shortening parameters ($\vec{w}$, $\vec{f}$, $\vec{b}$), which in turn affects the quality of the decision-feedback $\hat{\vec{x}}$ in the RS-SOVE. Therefore, there is a mismatch between the designed $\sigma$, and the practical $\sigma$ measured by the outputs of the RS-SOVE generated by such an designed $\sigma$. The output $\sigma_{\mathrm{out}}$ is measured with (\ref{sigma}), under each input $\sigma_{\mathrm{in}}$ which is utilized to generate the optimal parameters of the FOM  shortener. Under both channels, $\sigma_{\mathrm{in}}$ is increased from 0 to 1. The curves are shown in Fig. {\ref{fig5} and Fig. {\ref{fig6}, where we have two interesting observations. The first observation is that, with the FOM shortener, the RS-SOVE can only benefit from the hard decisions when the quality of the feedback is above a certain threshold, otherwise, setting $\sigma\!=\!0$, i.e., utilizing no feedback in the RS-SOVE (such as the UBM shortener) is close to optimal (also with optimized ($\vec{w}$, $\vec{f}$, $\vec{b}$) designed for $\sigma\!=\!0$). The second observation is that, when the RS-SOVE can benefit from the feedback, setting $\sigma\!=\!1$ is close to optimal, which is aligned with the analysis leading to (\ref{sigma2}).

With these observations, the design of the FOM shortener only needs to consider either $\sigma\!=\!0$ or 1. With $\sigma\!=\!0$, the UBM shortener is a more general model and has better performance than the FOM shortener. In addition, the optimization with UBM shortener is concave \cite{FDG12}. Hence, when designing the optimal channel shortener, it is sufficient to consider either the UBM shortener (\ref{UB2}), or the FOM shortener (\ref{md3}) with ($\vec{w}$, $\vec{f}$, $\vec{b}$) designed for $\sigma\!=\!1$. The remaining issue is the criterion for choosing between these two shorteners. Such a criterion is difficult to find theoretically, but as we show later through numerical results, it can be designed based on the code-rate of the considered SC systems. At medium and high code-rates, the proposed FOM channel shortener is superior to the UBM shortener.

With the HOM, FOM and UBM channel shorteners introduced in Sec. II-A, Sec. III-A, and Sec. III-B, respectively, next we analyze the mutual information (MI) characteristics. We show that the FOM shortener is superior to the HOM shortener in general, and better than the UBM shortener when the feedback $\hat{\vec{x}}$ are fairly good.

\begin{figure}
\vspace*{-8mm}
\hspace*{-3mm}
\begin{center}
\scalebox{0.42}{\includegraphics{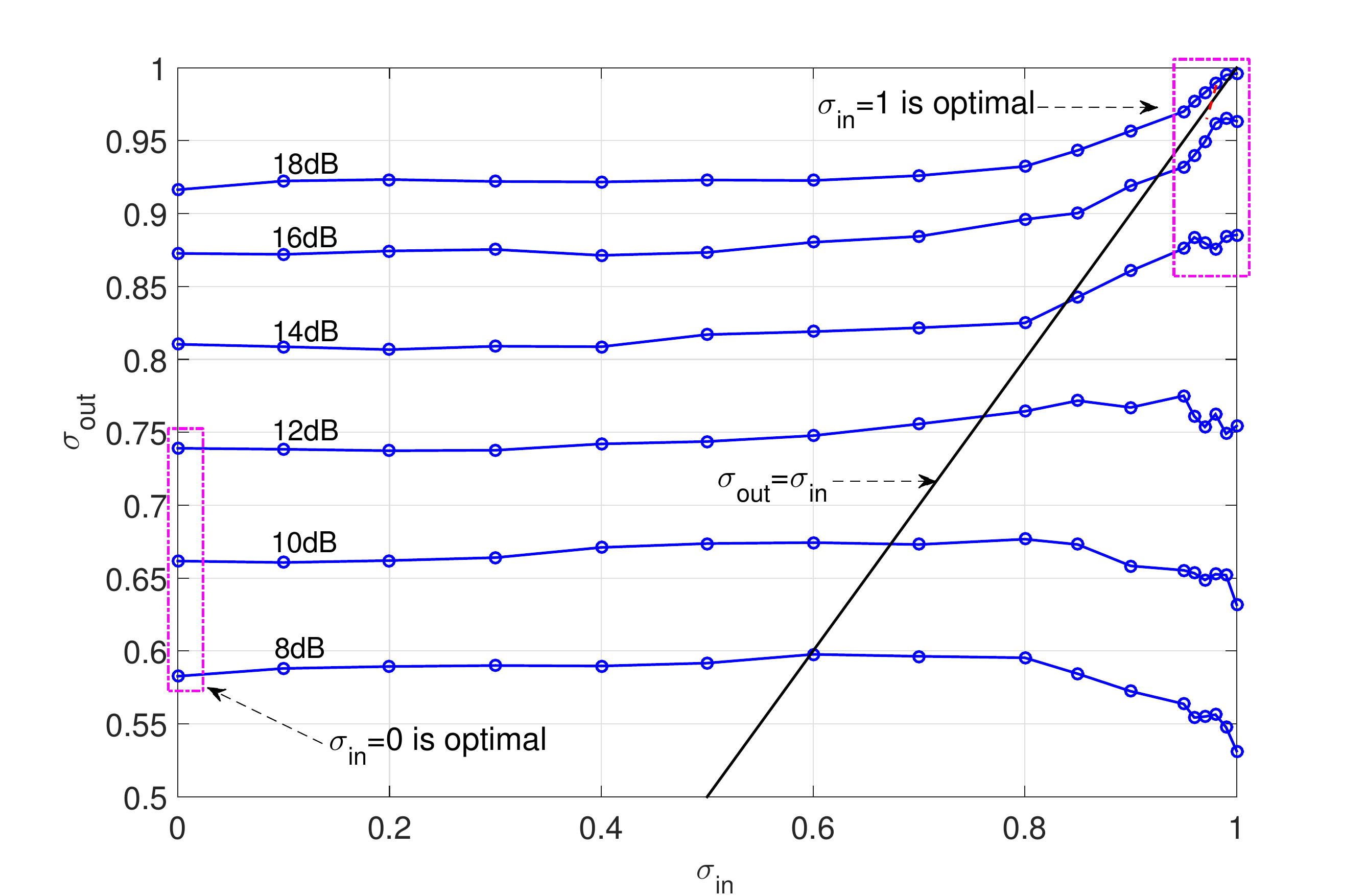}}
\vspace*{-7mm}
\caption{\label{fig5}The curves with optimal $\sigma$ investigation under EPR-4 channel and with 8PSK modulation. The $\sigma_{\mathrm{out}}$ is measured according to (\ref{sigma2}). }
\end{center}
\vspace*{-10mm}
\end{figure}

\begin{figure}
\vspace*{-6mm}
\hspace*{-3mm}
\begin{center}
\scalebox{0.42}{\includegraphics{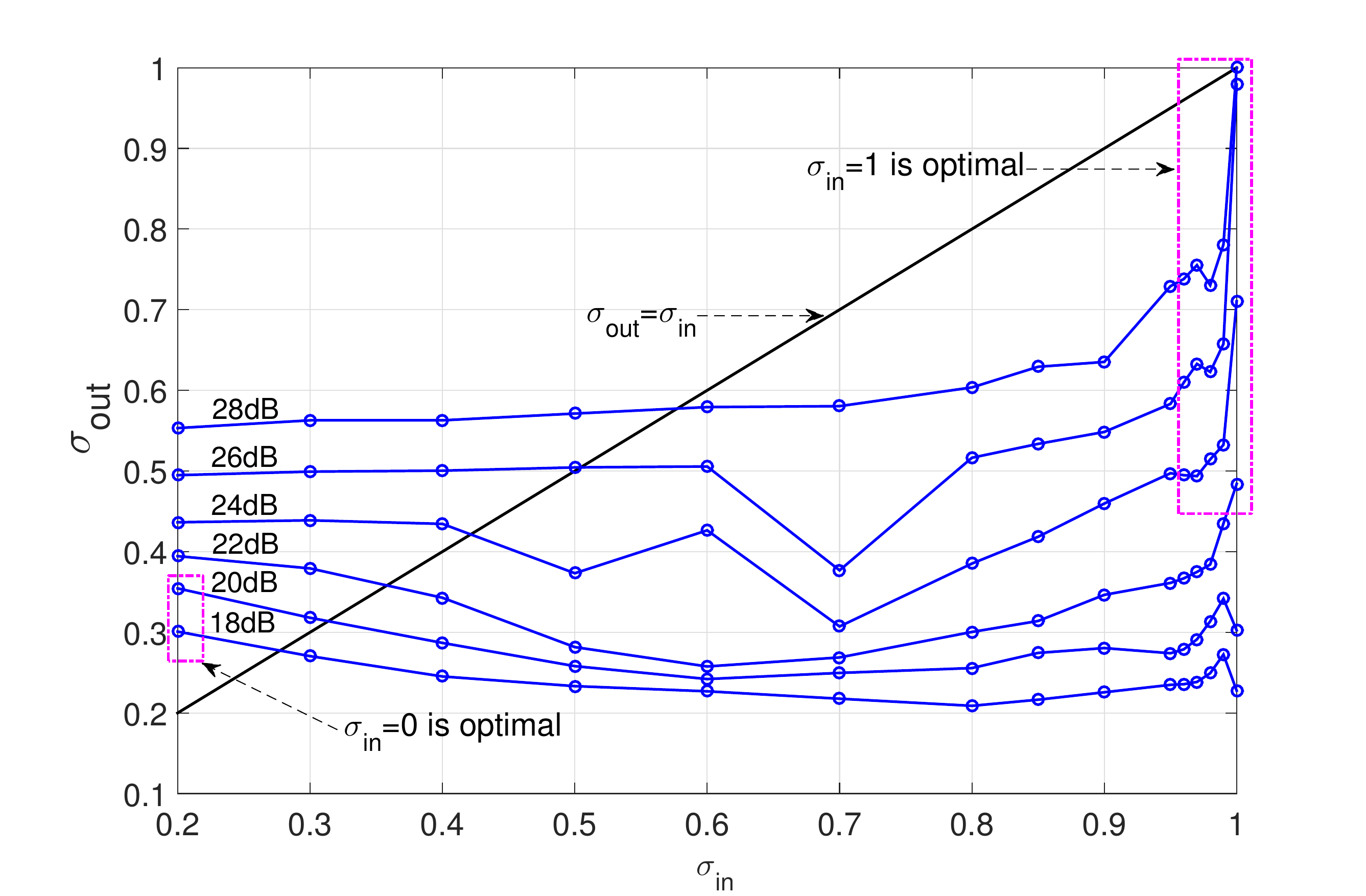}}
\vspace*{-7mm}
\caption{\label{fig6}Repeat the test in Fig. \ref{fig5} under Proakis-C channel and with 16-quadrature-amplitude-modulation (16QAM) modulation.}
\end{center}
\vspace*{-10mm}
\end{figure}

\section{Theoretical Information Rates of the Channel Shorteners}
For simplicity, we denote the optimal $I_{\mathrm{LB}}$ of the FOM and UBM shorteners as $I_{\mathrm{FOM}}$ and $I_{\mathrm{UBM}}$, calculated in (\ref{Ilb}) and (\ref{Ilb2}), respectively. Further, we denote $I_{\mathrm{FOM}}$ computed with $\sigma\!=\!0$ and 1 as $I_{\mathrm{FOM}}^0$ and $I_{\mathrm{FOM}}^1$. Similarly, we let $I_{\mathrm{HOM}}$ denote the information rate reached by the HOM shortener. Firstly, we state the below property.
\begin{property} \label{Prop1}
Denote $H_{\mathrm{f}}(\omega)$ and $H_{\mathrm{b}}(\omega)$ as the DTFTs of $\vec{h}_{\mathrm{f}}$ and $\vec{h}_{\mathrm{b}}$ in (\ref{hf}) and (\ref{hb}), respectively. Then, it holds that
\bea \label{bd1} I_{\mathrm{HOM}}^{\mathrm{L}}\leq I_{\mathrm{HOM}}\leq I_{\mathrm{HOM}}^{\mathrm{U}},\eea
where
{\setlength\arraycolsep{2pt} \bea I_{\mathrm{HOM}}^{\mathrm{L}} &=&\frac{1}{2\pi}\int_{-\pi}^{\pi}\log\left(1+\frac{|H_{\mathrm{f}}(\omega)|^2}{1+|H_{\mathrm{b}}(\omega)|^2}\right)\mathrm{d}\omega, \\
\label{prop1e1} I_{\mathrm{HOM}}^{\mathrm{U}}&=&\frac{1}{2\pi}\int_{-\pi}^{\pi}\log\left(1+|H_{\mathrm{f}}(\omega)|^2\right)\mathrm{d}\omega. \eea}
\end{property}

\begin{proof}
As $\vec{h}\!=\!\vec{h}_{\mathrm{f}}\!+\!\vec{h}_{\mathrm{b}}$, the received signal model (\ref{model3}) can be rewritten as
 \bea \tilde{\vec{y}}\ 
 =\vec{h}_{\mathrm{f}}\star\vec{x}+\vec{h}_{\mathrm{b}}\star\vec{x}+\tilde{\vec{n}}. \notag \eea
The lower bound of $I_{\mathrm{HOM}}$ is achieved when feedback $\hat{\vec{x}}$ acts as noise, while the upper bound is achieved when $\hat{\vec{x}}$ is perfect. Therefore, the inequality (\ref{bd1}) holds.
\end{proof}

More discussions about the properties of truncated channel response ${H}_{\mathrm{f}}(\omega)$ can be found in, e.g., \cite{SA98}. Here we mention the fact that, the upper bound $I_{\mathrm{HOM}}^{\mathrm{U}}$ can be higher than Shannon capacity $\mathcal{C}$, due to the perfect feedback. Secondly, we state Property 2.
\begin{property} \label{Prop2}
The below inequalities hold,
\bea \label{bd3} I_{\mathrm{HOM}}\leq I_{\mathrm{FOM}}^0\leq I_{\mathrm{UBM}}\leq \mathcal{C}.\eea
\end{property}
\begin{proof}
See Appendix C. 
\end{proof}
From Property 2, when there is no feedback, the FOM shortener is lower-bounded by the HOM shortener and upper-bounded by the UBM shortener. Further, all of them are bounded by $\mathcal{C}$. However, when $\sigma\!>\!0$, the rate of the FOM shortener can be higher than, both the UBM shortener and $\mathcal{C}$, due to the presence of feedback.

Lastly, we build the relationship between $ I_{\mathrm{HOM}}^{\mathrm{U}}$ and $I_{\mathrm{FOM}}^1$, which is stated in Property 3. Although with $\sigma\!=\!1$, the feedback $\hat{\vec{x}}$ is perfect, the symbol detection is still utilizing the received samples $\tilde{\vec{y}}$, which are not perfect. Therefore, such a comparison is meaningful and shows that, when there are no errors in $\hat{\vec{x}}$, the FOM shortener is superior to the HOM shortener.

\begin{property} \label{Prop3}
The below inequality holds,
\bea \label{bd2} I_{\mathrm{HOM}}^{\mathrm{U}} \leq I_{\mathrm{FOM}}^1. \eea
\end{property}
\begin{proof}
By setting $(\vec{w}, \vec{f}, \vec{b} )\!=\!(\vec{w}_{\mathrm{hom}}, \vec{h}_{\mathrm{f}}, \vec{h}_{\mathrm{b}})$, the FOM shortener is identical to the HOM shortener. And with $\sigma\!=\!1$, $I_{\mathrm{LB}}$ in this case equals $I_{\mathrm{HOM}}^{\mathrm{U}}$. As $I_{\mathrm{FOM}}^1$ maximizes $I_{\mathrm{LB}}$, (\ref{bd2}) holds.
\end{proof}

We summarize the above discussions in the below theorem.
\begin{theorem} \label{thm2}
The below equalities of theoretical information rates hold with $\sigma\!=\!0$,
\bea \label{thm2e1} I_{\mathrm{HOM}}^{\mathrm{L}}\leq I_{\mathrm{HOM}}\leq I_{\mathrm{FOM}}^0\leq I_{\mathrm{UBM}} \leq \mathcal{C} ,\eea
while with $\sigma\!=\!1$, the below inequalities hold,
  \bea \label{thm2eq2} I_{\mathrm{HOM}}\leq I_{\mathrm{HOM}}^{\mathrm{U}}\leq I_{\mathrm{FOM}}^1.  \eea
\end{theorem}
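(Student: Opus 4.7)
The plan is to recognize that Theorem 2 is essentially a consolidation of Properties 1, 2, and 3, which have already been established in the excerpt. So the proof amounts to stitching together the inequality chains supplied by those properties, with no new analytic content required.

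For the first chain \eqref{thm2e1} corresponding to $\sigma = 0$, I would proceed as follows. First, apply the left half of \eqref{bd1} in Property 1 to obtain $I_{\mathrm{HOM}}^{\mathrm{L}} \leq I_{\mathrm{HOM}}$; this is the worst-case bound that arises by treating the hard feedback $\hat{\vec{x}}$ as additional noise in the received model after homomorphic prefiltering. Next, invoke Property 2 directly, since its statement $I_{\mathrm{HOM}} \leq I_{\mathrm{FOM}}^{0} \leq I_{\mathrm{UBM}} \leq \mathcal{C}$ fills in the rest of the chain. Concatenating these two inequalities yields \eqref{thm2e1}.

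For the second chain \eqref{thm2eq2} corresponding to $\sigma = 1$, I would first use the right half of \eqref{bd1} in Property 1, which gives $I_{\mathrm{HOM}} \leq I_{\mathrm{HOM}}^{\mathrm{U}}$; this bound is tight under the idealization that $\hat{\vec{x}}$ is perfect, so that the channel tail $\vec{h}_{\mathrm{b}} \star \vec{x}$ is cancelled exactly. Then I would invoke Property 3, which asserts $I_{\mathrm{HOM}}^{\mathrm{U}} \leq I_{\mathrm{FOM}}^{1}$; the underlying observation there is that plugging $(\vec{w},\vec{f},\vec{b}) = (\vec{w}_{\mathrm{hom}},\vec{h}_{\mathrm{f}},\vec{h}_{\mathrm{b}})$ into the FOM MILB with $\sigma = 1$ reproduces $I_{\mathrm{HOM}}^{\mathrm{U}}$, while $I_{\mathrm{FOM}}^{1}$ is by definition the maximum over all $(\vec{w},\vec{f},\vec{b})$. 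Concatenation yields \eqref{thm2eq2}.

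Since each inequality is already proven in the excerpt (Properties 1--3 together with their Appendix-based justifications), there is no genuine obstacle here; the only thing to be careful about is to state explicitly which direction of \eqref{bd1} is used in each of the two chains, so that the reader sees that the $\sigma = 0$ chain uses the pessimistic side (feedback as noise) while the $\sigma = 1$ chain uses the optimistic side (perfect feedback), consistently with the value of $\sigma$ in the FOM term on the right-hand side of each chain.
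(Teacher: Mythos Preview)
Your proposal is correct and matches the paper's own proof, which simply states that combining Properties~1--3 yields Theorem~2. Your more explicit tracking of which half of \eqref{bd1} is used in each chain is a helpful elaboration, but the underlying argument is identical.
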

\begin{proof}
Combing Properties 1-3 yields Theorem 2.
\end{proof}
Theorem 2 shows that, when the quality of feedback is poor, i.e., $\sigma\!=\!0$, the UBM shortener has best performance compared to both the FOM and HOM shorteners, while when the feedback is perfect, the FOM shortener outperforms both the HOM and UBM shorteners. Moreover, as we showed earlier, the optimal $\sigma$ for the FOM shortener is either 0 or 1, hence, one can design a system that switches between the UBM shortener, and the FOM shortener designed for $\sigma\!=\!1$, to achieve the best performance under all cases.

Note that, with $\vec{f}\!=\!\vec{h}_{\mathrm{f}}$ and the optimal $\vec{w}_{\mathrm{opt}}, \vec{b}_{\mathrm{opt}}$ calculated in (\ref{optw}) and (\ref{optb}), $I_{\mathrm{LB}}$ equals
\bea \label{lb} I_{\mathrm{LB}}=1+\frac{1}{2\pi}\int_{-\pi}^{\pi}\Big(\log\left(1+|H_{\mathrm{f}}(\omega)|^2\right) +M(\omega)\left(1+|H_f(\omega)|^2\right) \Big)\mathrm{d}\omega-\vec{\varepsilon}_1\rmh\vec{\varepsilon}_2^{-1}\vec{\varepsilon}_1.  \eea
By definition, $I_{\mathrm{LB}}$ in (\ref{lb}) is no less than $I_{\mathrm{LB}}$ computed with $(\vec{w}_{\mathrm{hom}}, \vec{h}_{\mathrm{f}}, \vec{h}_{\mathrm{b}})$, which equals $I_{\mathrm{HOM}}^{\mathrm{U}}$. From (\ref{prop1e1}) and (\ref{lb}), we have an interesting corollary below that shows the relation between $\vec{h}$ and $\vec{h}_{\mathrm{f}}$ for any ISI channels, and reveals the fact that, with the same target response $\vec{f}$ but optimized $\vec{w}$, $\vec{b}$, the FOM shortener outperforms the HOM shortener.

\begin{corollary}
For any ISI channel $\vec{h}$ and the target response $\vec{h}_{\mathrm{f}}$ defined in (\ref{hf}), the inequality 
{\setlength\arraycolsep{2pt} \bea \label{deltaI} \vec{\varepsilon}_1\rmh\vec{\varepsilon}_2^{-1}\vec{\varepsilon}_1-\frac{1}{2\pi}\int_{-\pi}^{\pi}M(\omega)\left(1+|H_{\mathrm{f}}(\omega)|^2\right)\mathrm{d}\omega \leq 1,\notag \eea}
\hspace{-1.4mm}holds, where $\vec{\varepsilon}_1$, $\vec{\varepsilon}_1$ are defined in (\ref{vareps1}) and (\ref{vareps2}) with $F(\omega)\!=\!H_{\mathrm{f}}(\omega)$, and $M(\omega)$, $\tilde{M}(\omega)$ are defined in (\ref{mw}) and (\ref{mtw}) with $\sigma\!=\!1$, respectively.
\end{corollary}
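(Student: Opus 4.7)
The plan is to recognize the claimed inequality as nothing more than an algebraic reshuffling of the relation $I_{\mathrm{LB}}\geq I_{\mathrm{HOM}}^{\mathrm{U}}$, where $I_{\mathrm{LB}}$ denotes the MILB given by the closed form in (\ref{lb}), i.e., evaluated at $\vec{f}\!=\!\vec{h}_{\mathrm{f}}$, $\sigma\!=\!1$, and at the corresponding optimal prefilter and feedback filter from Theorem \ref{thm1}. This observation is essentially flagged in the paragraph preceding the corollary, so the task is to formalize the information-theoretic comparison and then carry out the elementary rearrangement.

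For the information-theoretic step, I would reuse the argument underlying Property \ref{Prop3}. Evaluating $I_{\mathrm{LB}}$ at the particular triple $(\vec{w}_{\mathrm{hom}},\vec{h}_{\mathrm{f}},\vec{h}_{\mathrm{b}})$ with $\sigma\!=\!1$ turns the mismatched Forney model (\ref{md3}) into the matched HOM detection model with perfect decision-feedback, so this value of $I_{\mathrm{LB}}$ coincides with the genuine mutual information $I_{\mathrm{HOM}}^{\mathrm{U}}$ defined in (\ref{prop1e1}). Since Theorem \ref{thm1} asserts that $W_{\mathrm{opt}}$ and $B_{\mathrm{opt}}$ jointly maximize $I_{\mathrm{LB}}$ for each fixed target response, the right-hand side of (\ref{lb}) is at least as large as this value, establishing $I_{\mathrm{LB}}\geq I_{\mathrm{HOM}}^{\mathrm{U}}$.

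The algebraic step is then immediate: form the difference between (\ref{lb}) and (\ref{prop1e1}); the term $\frac{1}{2\pi}\int_{-\pi}^{\pi}\log(1+|H_{\mathrm{f}}(\omega)|^2)\mathrm{d}\omega$ cancels on both sides, and what remains is exactly $1+\frac{1}{2\pi}\int_{-\pi}^{\pi}M(\omega)(1+|H_{\mathrm{f}}(\omega)|^2)\mathrm{d}\omega-\vec{\varepsilon}_1\rmh\vec{\varepsilon}_2^{-1}\vec{\varepsilon}_1\geq 0$, which rearranges to the stated bound. The main obstacle, if any, is justifying the identity $I_{\mathrm{LB}}(\vec{w}_{\mathrm{hom}},\vec{h}_{\mathrm{f}},\vec{h}_{\mathrm{b}})\!=\!I_{\mathrm{HOM}}^{\mathrm{U}}$ at $\sigma\!=\!1$; this can be checked either by direct substitution into the Lemma 1 formula with $F\!=\!H_{\mathrm{f}}$, $W\!=\!W_{\mathrm{hom}}$, $B\!=\!H_{\mathrm{b}}$ and exploiting the all-pass property of $W_{\mathrm{hom}}$, or, more cleanly, by invoking the general fact that a mismatched-decoding lower bound collapses to the true Gaussian mutual information whenever the receiver's assumed model matches the true conditional distribution of $\vec{y}$.
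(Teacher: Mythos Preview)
Your proposal is correct and is essentially identical to the paper's own argument: the paper states immediately before the corollary that $I_{\mathrm{LB}}$ in (\ref{lb}) is no less than the value of $I_{\mathrm{LB}}$ at $(\vec{w}_{\mathrm{hom}},\vec{h}_{\mathrm{f}},\vec{h}_{\mathrm{b}})$, which equals $I_{\mathrm{HOM}}^{\mathrm{U}}$, and then obtains the corollary from (\ref{prop1e1}) and (\ref{lb}) by exactly the cancellation and rearrangement you describe.
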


\section{Empirical Results}
In this section, we provide empirical results to show the information rates and detection performance of the proposed FOM channel shortener with the RS-SOVE, and compare it to the UBM and HOM shorteners. Throughout all tests, without explicitly pointing out, we assume that the memory length $\nu\!=\!1$ after channel shortening to achieve a low-complexity receiver design.

For each transmit symbol vector $\vec{x}$, with different channel shorteners, the bit LLRs $L(x_{k,n})$ are calculated in (\ref{llr}) based on different branch metric computations as in (\ref{metric1}), (\ref{metric2}), and (\ref{metric3}), respectively. As the transmit bits $x_{k,n}$ are independent, the logarithm of the conditional probability of each symbol $x_k\rq{}\!\in\!\mathcal{X}$ for a given transmit symbol $x_k$, i.e., $p(x_k\rq{}|x_k)$, can be computed as
{\setlength\arraycolsep{2pt}\bea \label{pxn}\log p(x_k\rq{}|x_k)  &=&\sum\limits_{n=0}^{\log_2 |\mathcal{X}|-1}\log p(x_{k,n}\rq{}|x_{k,n}) \notag \\
&=&\sum_{n=0}^{\log_2 |\mathcal{X}|-1}\Big(\frac{(1+x_{k,n}\rq)L(x_{k,n})}{2}-\log\big(1+\exp\left(L(x_{k,n})\right)\big)\Big).\qquad \notag \eea}
\hspace*{-1.4mm}Then, the measured MI is calculated as
 \bea I(\vec{y};\vec{x})=\log |\mathcal{X}|-\mathbb{E}_{x_k,x_k\rq{}\in\mathcal{X}}\big[\log p(x_k\rq{}|x_k)\big].  \notag \eea

\subsection{The Impact of Decision-Delay $D$ in RS-SOVE}
First, we evaluate the normalized MI measured for the EPR-4 channel, and investigate the impact of decision-delay $D$ for different modulation schemes in RS-SOVE. The HOM shortener is tested with $D$ set to $L\!-\!1$, $L\!+\!2$ and $L\!+\!20$, respectively. As can be seen in Fig. \ref{fig7}, with 16QAM modulation, enlarging $D$ from $L\!-\!1$ to $L\!+\!2$ has around an SNR gain of 0.4 dB in terms of the normalized MI. However, further increasing $D$ up to $L\!+\!20$ only has marginal SNR gain. Since a larger delay increases process latency in the RS-SOVE, in the remaining tests we set $D\!=\!L\!+\!2$ in the RS-SOVE for all channel shorteners.

\begin{figure}
\hspace{24mm}
\vspace*{-4mm}
\scalebox{.42}{\includegraphics{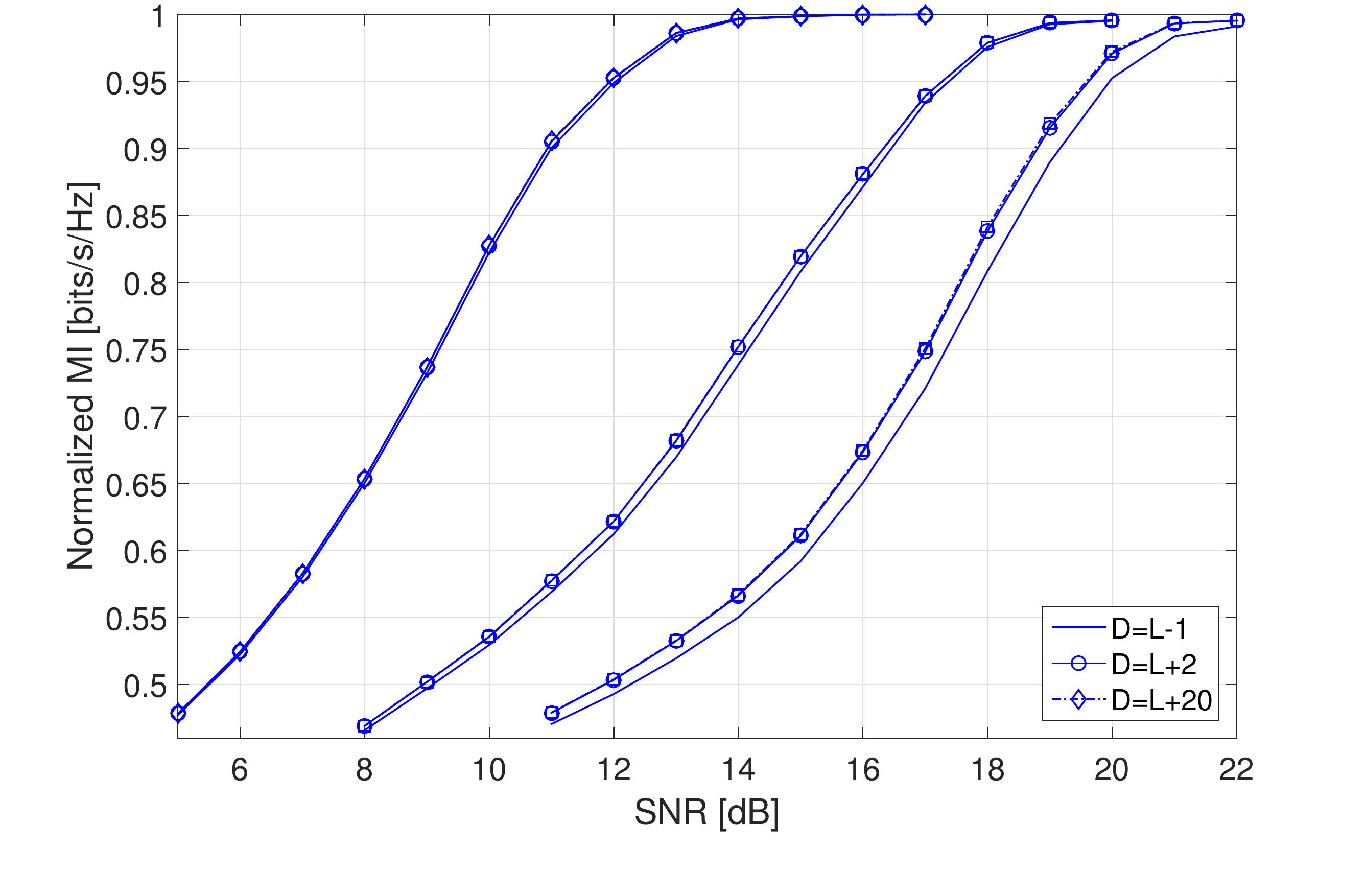}}
\vspace*{-8mm}
\caption{\label{fig7}Performance evaluation of the HOM shortener with different delays and modulation schemes. From left to right, the modulation schemes are quadrature-phase-shift keying (QPSK), 8PSK and 16QAM. The normalized MI is measured with the output from the RS-SOVE.}
\vspace*{-7mm}
\end{figure}

\subsection{Theoretical Information Rates}
Next, we simulated the theoretical information rates that have been discussed in Sec. IV under EPR-4 and Proakis-C channels. In comparison, we also add the rates of $I_{\mathrm{FOM}}$ with $\sigma\!=\!1\!-\!\delta_{\mathrm{mse}}/2$. The rates of LMMSE detection and Shannon capacity $\mathcal{C}$ are also presented. As can be seen, both in Fig. \ref{fig8} and Fig. \ref{fig9}, with $\sigma\!=\!0$ the information rates of the FOM shortener ($I_{\mathrm{FOM}}^0$) and the UBM shortener ($I_{\mathrm{UBM}}$) are quite close. In the low SNR regime, the UBM shortener is superior, while in the high SNR regime, the information rates of the FOM shortener with $\sigma\!=\!1$ ($I_{\mathrm{FOM}}^1$) are the best. Under EPR-4 channel, $I_{\mathrm{FOM}}^1$ and $I_{\mathrm{HOM}}^{\mathrm{U}}$ asymptotically align with $\mathcal{C}$, while under Proakis-C channel, both $I_{\mathrm{FOM}}^1$ and $I_{\mathrm{HOM}}^{\mathrm{U}}$ are higher than $\mathcal{C}$. Moreover, when SNR increases, $I_{\mathrm{HOM}}^{\mathrm{U}}$ asymptotically approaches the rate $I_{\mathrm{FOM}}^1$.

As the differences of the information rates between the FOM shortener with $\sigma\!=\!0$ and the UBM shortener cannot be seen clearly in Fig. {\ref{fig8} and Fig. {\ref{fig9}, in Fig. {\ref{fig10} we normalize $I_{\mathrm{FOM}}^0$ with $I_{\mathrm{UBM}}$. In addition, we also add the results of another type of ISI channel stated in Example 3. The results show that the FOM shortener is slightly inferior to the UBM shortener, which is aligned with Property 3.
\begin{example}A 5-tap IID complex Gaussian channel with unit energy per realization..\end{example}

\vspace{-6mm}

\subsection{Measured MI}
In order to verify the practical performance, we measure the MI achieved by the three shorteners with different modulation schemes and under Proakis-C channels. As can be seen from Fig. \ref{fig11}, the measured MI results are aligned with the theoretical analysis illustrated in Fig. \ref{fig9}. The UBM shortener outperforms both the FOM and HOM shorteners in the low SNR regime. But when SNR increases, the FOM shortener becomes the best. The HOM shortener is in general inferior to the FOM shortener, and in the high SNR regime, the HOM shortener performs close to the FOM shortener. We also add the information rates of the FOM shortener with both $\sigma\!=\!1\!-\!\delta_{\mathrm{mse}}/2$ and $\sigma\!=\!1\!-\!\delta_{\mathrm{mse}}/4$, which are inferior to the rates of the FOM shortene with $\sigma\!=\!1$ in the high SNR regime. 

Most interestingly, the cross points of the FOM shortener with $\sigma\!=\!1$ and the UBM shortener are below 1/2 in terms of the normalized MI, which indicates that, the switching criterion of the FOM and UBM shorteners can be based on the output MI of the RS-SOVE, or equivalently, the input MI to the outer-decoder. As for error-correcting codes, the input MI of the LLRs sent to the decoders shall be no less than the code-rate for successfully decoding. Therefore, we can use the code-rate as the criterion. If the code-rate is higher than 1/2, the proposed FOM shortener will provide better performance, otherwise we switch to the UBM shortener. This is also due to the fact that, the FOM shortener is superior to the UBM shortener only when the feedback quality is fairly good.

\begin{figure}
\hspace{-3mm}
\vspace{-4mm}
\begin{center}
\scalebox{0.42}{\includegraphics{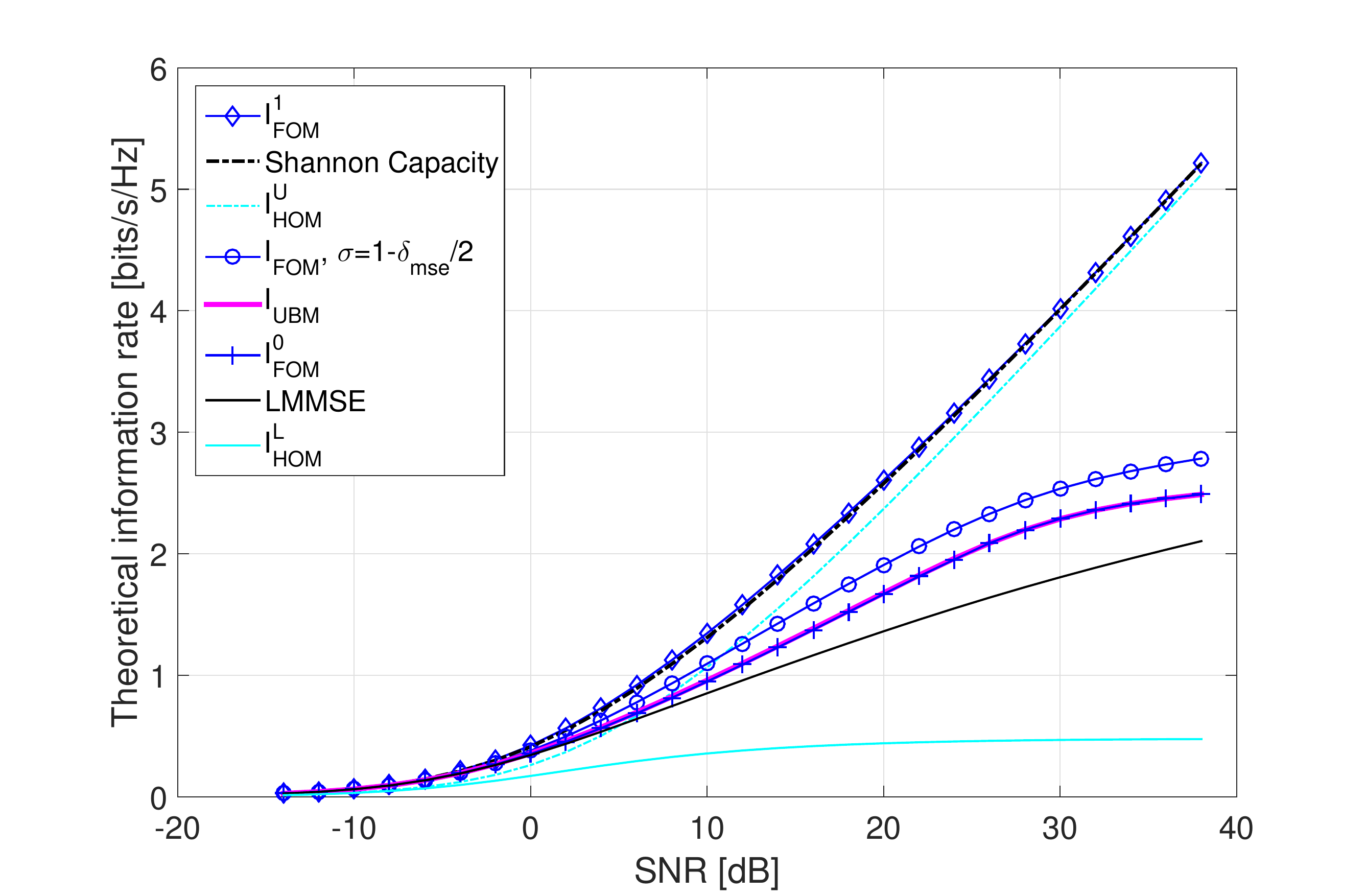}}
\vspace{-6mm}
\caption{\label{fig8}Theoretical information rates under EPR-4 channel (the legend is ordered from the top curve to the bottom curve).}
\end{center}
\vspace{-10mm}
\end{figure}

\begin{figure}
\hspace{-3mm}
\vspace{-3mm}
\begin{center}
\scalebox{0.42}{\includegraphics{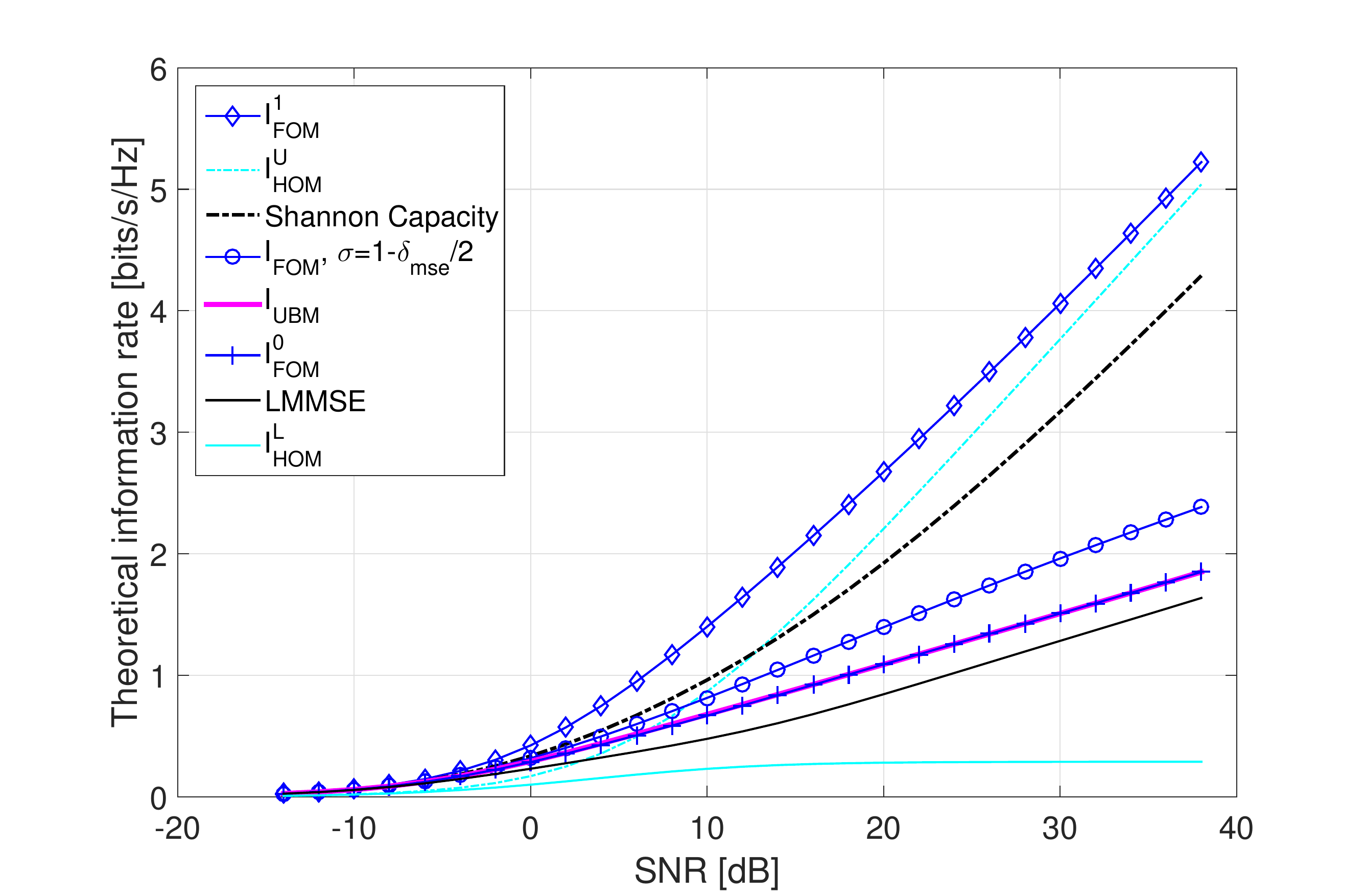}}
\vspace*{-6mm}
\caption{\label{fig9}Repeat the test in Fig. {\ref{fig8} under Proakis-C channel.}}
\end{center}
\vspace{-9mm}
\end{figure}

\begin{figure}
\hspace*{-3mm}
\vspace*{-4mm}
\begin{center}
\scalebox{0.42}{\includegraphics{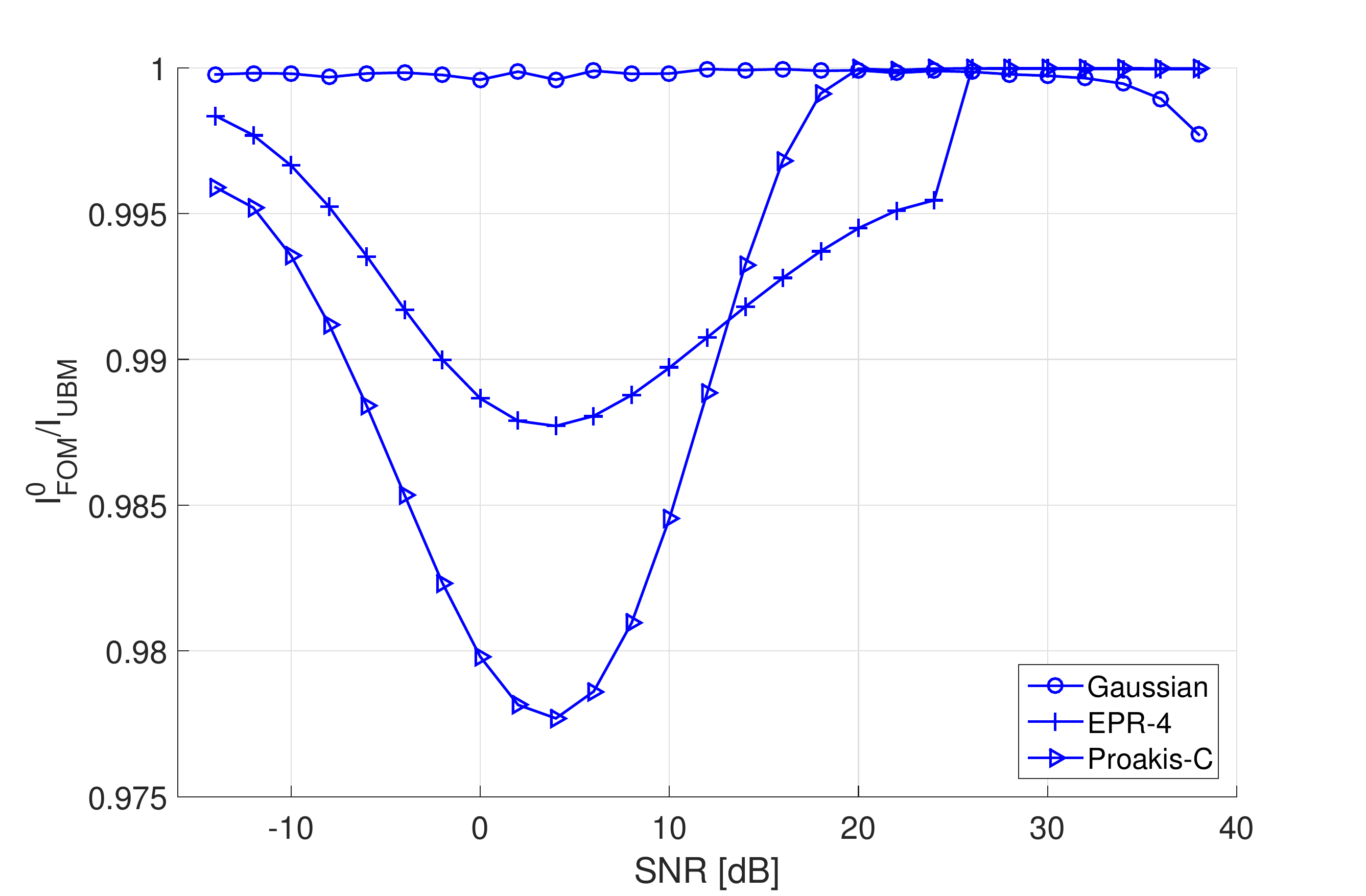}}
\vspace*{-6mm}
\caption{\label{fig10}The normalized information rates of the FOM shortener with $\sigma\!=\!0$ over the rates of the UBM shortener under EPR-4, Proakis-C, and 5-tap IID Guassian channels.}
\end{center}
\vspace*{-12mm}
\end{figure}

\subsection{Parameter Optimization of the FOM Channel Shortener }
Next, we evaluate the parameter optimization of the FOM channel shortener. As stated in Theorem 1, the optimal prefilters $\vec{w}$ and $\vec{b}$ are in closed forms, while the optimal $\vec{f}$ has to be found through an optimization process. 

In Fig. \ref{fig14}, we plot the convergence speed under EPR-4 and Proakis-C channels at different SNR points. We test with $\sigma\!=\!1/2$ and $\sigma\!=\!1$, respectively. As can been seen, the optimization converges very fast in a few number of iterations.

\subsection{Performance Evaluation with Turbo Codes}
At last, we evaluate the BER performance with turbo codes specified in LTE standard \cite{Tdec}. we set the number of information bits $K\!=\!1064$ for all tests, and evaluate different code-rates and modulation schemes. In Fig. \ref{fig15}, we show the BER results under EPR-4 channel and with 8PSK modulation. As expected, the UBM shortener performs the best at code-rates 1/3 and 1/2. At higher code-rates 2/3 and 3/4, the UBM shortener becomes inferior to the FOM shortener. In all cases, the FOM shortener is superior to the HOM shortener. 

In Fig. \ref{fig16}, the BER results under Proakis-C channel and with 16QAM modulation are presented. In this case, the UBM shortener outperforms the other two channel shorteners at code-rate 1/3 only. At higher code-rates, the UBM shortener tends to perform poorly. However, the proposed FOM shortener is still around 1-2 dB better in terms of SNR than the HOM shortener at all code-rates. These results are also aligned with Fig. \ref{fig11}, where we show that, the UBM shortener outperforms the FOM shortener only when the normalized MI below around 1/2, while with higher MI, the UBM shortener is inferior.

\begin{figure}[t]
\hspace{-15mm}
\vspace{-6mm}
\scalebox{0.54}{\includegraphics{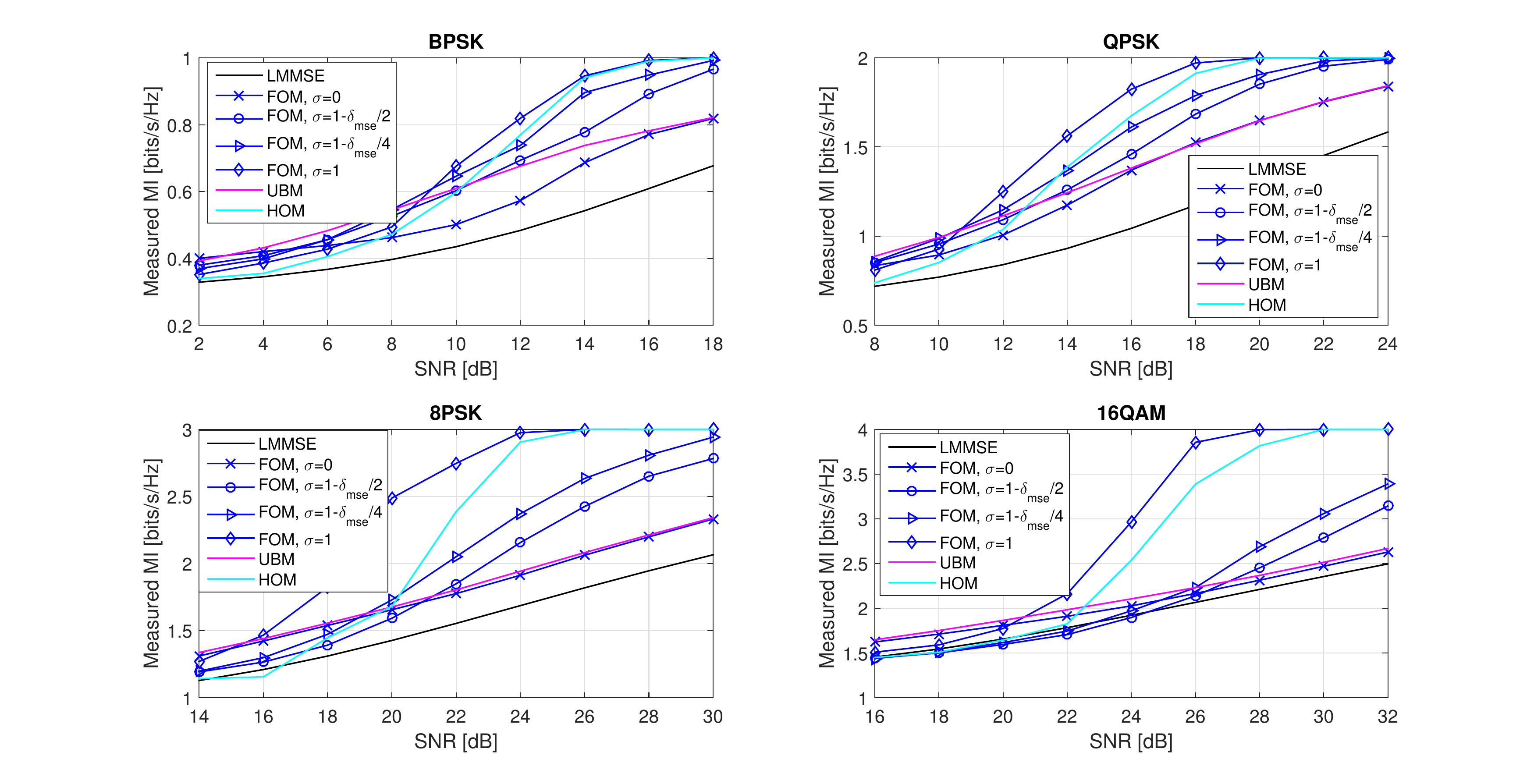}}
\vspace{-11mm}
\caption{\label{fig11}Measured MI under Proakis-C channel and with different modulation schemes. The UBM shortener provides the best performances when the normalized MI is lower than around 1/2, while the FOM shortener with $\sigma\!=\!1$ is the best for normalized MI higher than 1/2. The conventional HOM shortener is in general interior to the FOM shortener, except that in the high SNR regime it approaches the rates of the FOM. Moreover, the cross-points between the FOM and UBM shorteners are around 1/2 in terms of the normalized MI.}
\vspace{-8mm}
\end{figure}

\begin{figure}
\vspace*{-0mm}
\begin{center}
\hspace*{-2mm}
\scalebox{.375}{\includegraphics{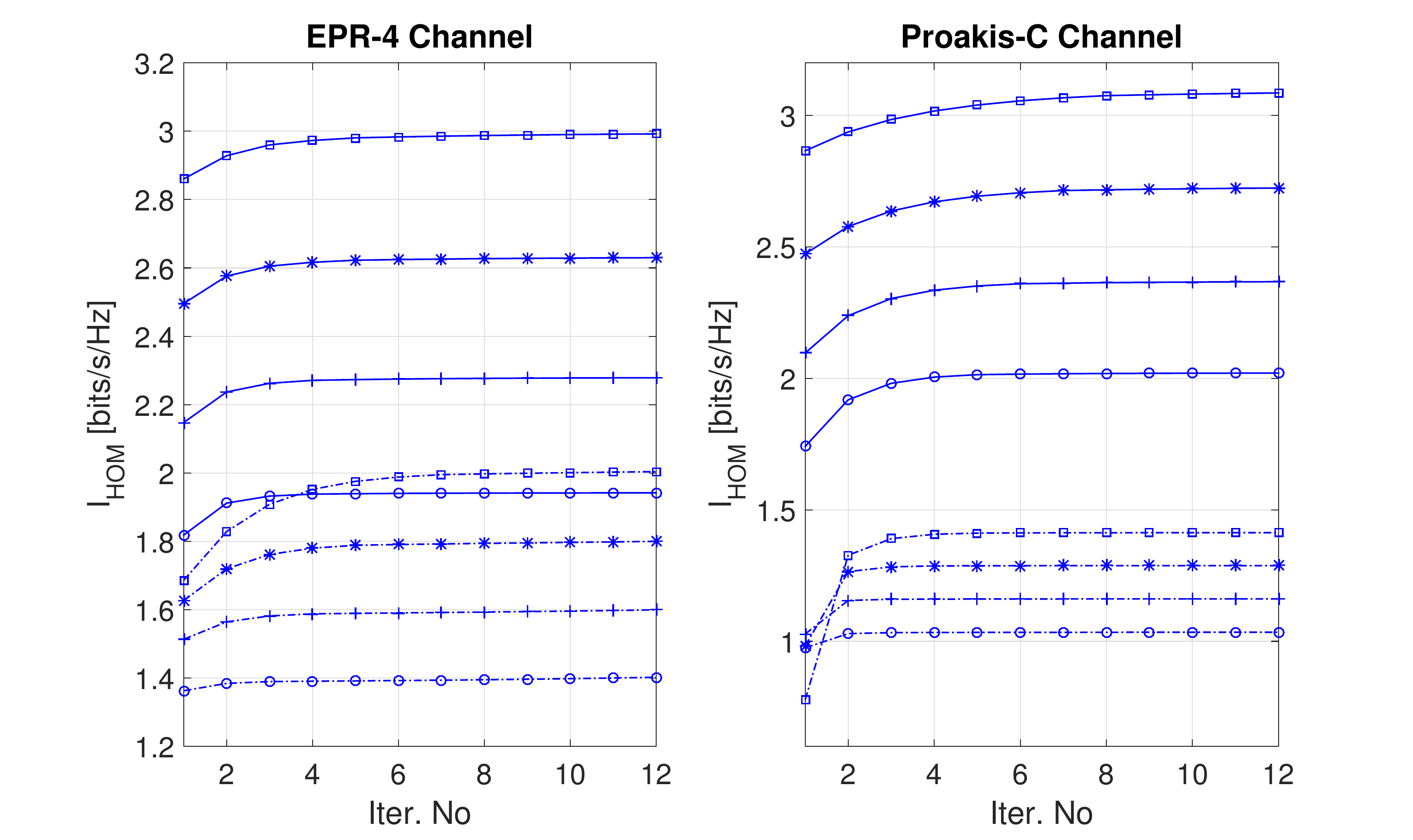}}
\vspace*{-6mm}
\caption{\label{fig14}The convergence speed of the FOM shortener. The dashed lines are with $\sigma\!=\!1/2$ while the solid lines are with $\sigma\!=\!1$. In both cases and from bottom to up, the SNR equals 10dB, 12dB, 14dB, and 16dB, respectively. With larger $\sigma$, the optimization need more steps to converge. However, as can be seen in both figures, the optimization process converges in 4-8 iterations.}
\vspace*{-8mm}
\end{center}
\end{figure}
\begin{figure}
\vspace*{-1mm}
\hspace*{22mm}
\scalebox{.42}{\includegraphics{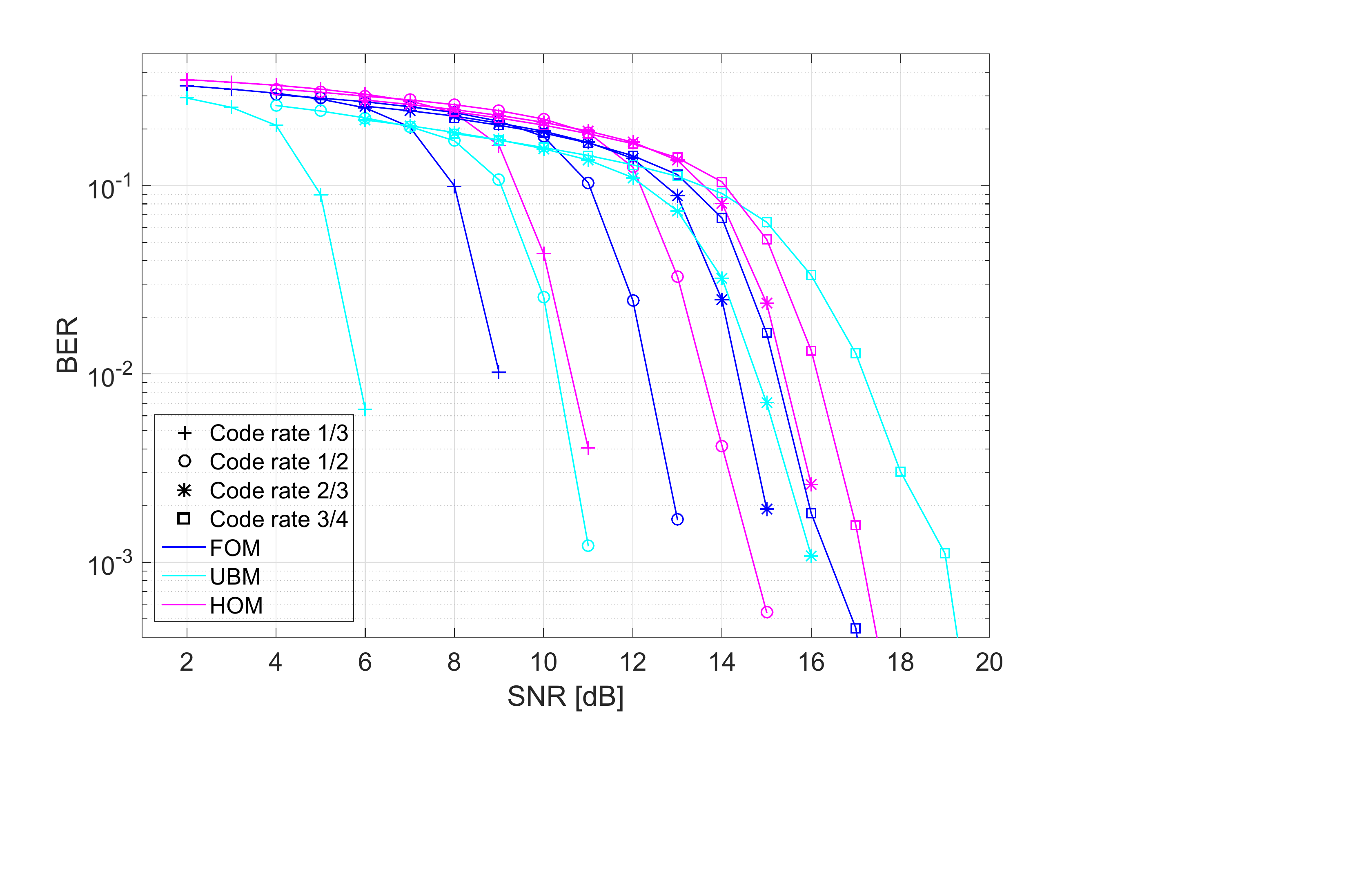}}
\vspace*{-33mm}
\caption{\label{fig15}The coded BER evaluation with turbo codes under EPR-4 channel. At higher code-rate 2/3 and 3/4, the FOM shortener is superior to the UBM shortener, while at all code-rates, the FOM shortener is better than the HOM shortener. }
\vspace*{-4mm}
\end{figure}
\begin{figure}
\hspace{22mm}
\vspace{-0mm}
\scalebox{.42}{\includegraphics{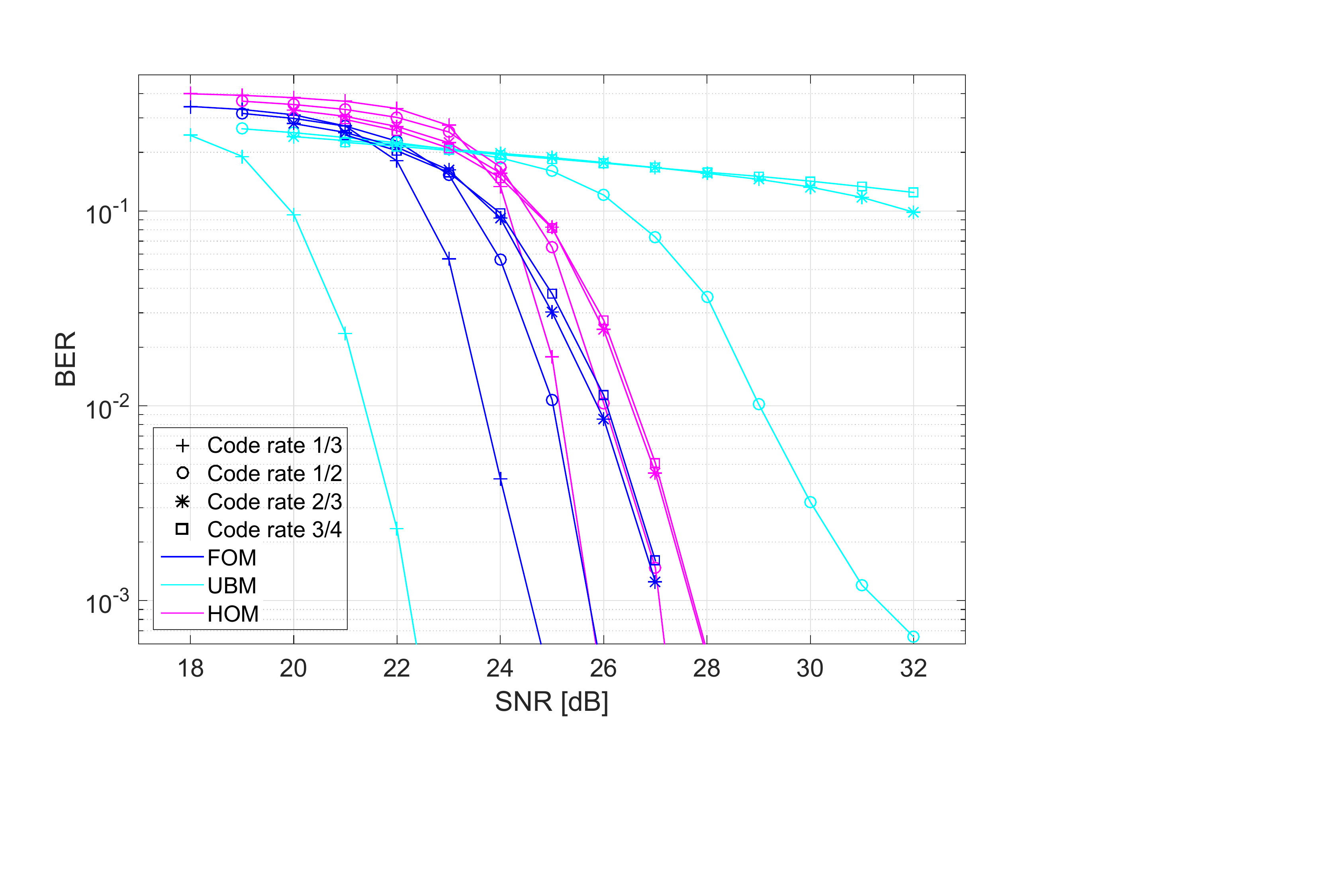}}
\vspace*{-33mm}
\caption{\label{fig16}The coded BER evaluation with turbo codes under Proakis-C channel. Truncating the channel tails in the UBM shortener renders significantly performance losses at code-rates higher than 1/2. The FOM shortener is better than the HOM shortener for all code-rates.}
\vspace*{-6mm}
\end{figure}

\section{Summary}
In this paper, we consider the mutual information lower bound (MILB) based channel shortener design that used in conjunction with the reduced-state soft-output Viterbi equalizer (RS-SOVE), namely, the FOM shortener. We show that the FOM channel shortener cooperating with the RS-SOVE has major gains over the Ungerboeck detection model based channel shortener, namely, the UBM shortener, at medium and high code-rates. Due to the lack of probabilistic meaning, the UBM shortener truncates the channel tails and utilizes no decision-feedback detection. Both the FOM and UBM shorteners significantly outperform the conventional homomorphic filtering based channel shortener, namely, the HOM shortener. We also analyze the theoretical information rates of the proposed FOM channel shortener in relation to the Shannon capacity and the previous channel shortener designs. In addition, we extend the RS-SOVE to an arbitrary delay that can be larger than the duration of the intersymbol interference (ISI) channel, and we show that, the trellis search process is equivalent to a full forward recursion and a backward recursion with a depth that equals the delay.

\section*{Appendix A: The Proof of Theorem 1}
The DTFT of $\vec{w}$ reads
\bea \label{Ww} W(\omega)=\sum_{k=-\infty}^{\infty}w_k\exp\!\left(j k\omega \right), \notag\eea
and the differential of $I_{\mathrm{LB}}$ in (\ref{ibarm1}) with respect to $w_k$ is
{\setlength\arraycolsep{2pt}  \bea \label{wwopt} \frac{\partial I_{\mathrm{LB}}}{\partial w_k}&=&-\frac{1}{2\pi}\! \int_{-\pi}^{\pi}\!\frac{|F(\omega)|^{2}\big(N_0\!+\!|H(\omega)|^2\big)W^{\ast}(\omega)}{1+ |F(\omega)|^2}\exp\!\left(j k\omega \right)\mathrm{d}\omega\!  \notag \\
&&+ \frac{1}{\pi}\!\int_{-\pi}^{\pi}\!\!\Big(F^{\ast}(\omega)H(\omega)\!+\! \frac{\sigma |F(\omega)|^{2}H(\omega)B^{\ast}(\omega)}{1\!+\! |F(\omega)|^2}\Big)\!\exp\!\big(j k\omega \big)\mathrm{d}\omega.   \eea}
\hspace{-1.4mm}As (\ref{wwopt}) shall equal zero for all $k$, the optimal $W(\omega)$ is given in (\ref{optw}). Inserting $W_{\mathrm{opt}}(\omega)$ back into (\ref{ibarm1}) yields,
{\setlength\arraycolsep{2pt} \bea \label{ibarm1optw} I_{\mathrm{LB}}&=&1\!+\!\frac{\sigma}{\pi}\!\int_{-\pi}^{\pi}\!\!\mathcal{R}\!\big\{F^{\ast}(\omega)B(\omega)M(\omega) \big\}\mathrm{d}\omega \notag \\
&&+\frac{1}{2\pi}\! \int_{-\pi}^{\pi}\!\!\Big(\!\log\!\big(1\!+\! |F(\omega)|^2\big)+\frac{\tilde{M}(\omega)|B(\omega)F(\omega)|^2}{1\!+\!|F(\omega)|^2}\!+\!M(\omega)\big(1\!+\!|F(\omega)|^2\big) \!\Big)\mathrm{d}\omega.  \eea}
\hspace{-1.4mm}Setting $\sigma\!=\!0$, $I_{\mathrm{LB}}$ in (\ref{ibarm1optw}) equals (\ref{jw}). With $0\!<\!\sigma\!\leq\!1$, the terms related to $B(\omega)$ in (\ref{ibarm1optw}) are
 \bea \label{isif2} \mathcal{F}(B(\omega))&=&\frac{\sigma}{\pi}\!\int_{-\pi}^{\pi}\mathcal{R}\big\{F^{\ast}(\omega)B(\omega)M(\omega) \big\}\mathrm{d}\omega +\frac{1}{2\pi}\! \int_{-\pi}^{\pi}\!\frac{\tilde{M}(\omega)|B(\omega)F(\omega)|^2}{1\!+\!|F(\omega)|^2}\mathrm{d}\omega. \eea
With $\vec{\varepsilon}_1$, $\vec{\varepsilon}_2$ defined in (\ref{vareps1}) and (\ref{vareps2}), (\ref{isif2}) can be rewritten as
\bea  \label{bw1} \mathcal{F}(B(\omega))=\vec{b}\vec{\varepsilon}_2\vec{b}\rmh\!+\!2\mathcal{R}\big\{\vec{b}\vec{\varepsilon}_1 \big\}. \eea
Optimizing (\ref{bw1}) directly yields
\bea \label{bwopt} \vec{b}_{\mathrm{opt}}=-\vec{\varepsilon}_1\rmh\vec{\varepsilon}_2^{-\!1}. \notag \eea
Then the optimal $B_{\mathrm{opt}}(\omega)$ is given in (\ref{optb}). Inserting $B_{\mathrm{opt}}(\omega)$ back into (\ref{ibarm1optw}), $I_{\mathrm{LB}}$ for the optimal $W_{\mathrm{opt}}(\omega)$ and $B_{\mathrm{opt}}(\omega)$, after some manipulations, is in (\ref{Ilb}).

\section*{Appendix B: Proof of Inequality ($\mathrm{a}$) in (\ref{sigma0}) }
Assume that $\vec{e}\!=\!\hat{\vec{x}}^{\mathrm{LMMSE}}-\hat{\vec{x}}$, where $\hat{\vec{x}}$ are the hard decisions corresponding to LMMSE esimates $\hat{\vec{x}}^{\mathrm{LMMSE}}$. Then,
{\setlength\arraycolsep{2pt}  \bea  
&&\mathbb{E}\left[\left(\vec{x}-\hat{\vec{x}}^{\mathrm{LMMSE}}\right)\left(\vec{x}-\hat{\vec{x}}^{\mathrm{LMMSE}}\right)^\dag\right]  \notag \\
&&=\mathbb{E}\left[\left(\vec{x}-\hat{\vec{x}}-\vec{e}\right)\left(\vec{x}-\hat{\vec{x}}-\vec{e}\right)^\dag\right]  \notag \\
&&=\left(1-P_e^{\mathrm{LMMSE}}\right)\mathbb{E}\left[\vec{e}\vec{e}^\dag\right] +P_e^{\mathrm{LMMSE}}\left(\mathbb{E}\left[\left(\vec{x}-\hat{\vec{x}}\right)\left(\vec{x}-\hat{\vec{x}}\right)^\dag\right]+\mathbb{E}\left[\vec{e}\vec{e}^\dag\right] \right) \notag \\
&& \geq P_e^{\mathrm{LMMSE}}\mathbb{E}\left[\left(\vec{x}-\hat{\vec{x}}\right)\left(\vec{x}-\hat{\vec{x}}\right)^\dag\right].  \notag \eea}
\hspace{-1.4mm}Assuming $\vec{x}$ and $\hat{\vec{x}}$ are independent for higher-order modulations, it holds that
 \bea \mathbb{E}\left[\left(\vec{x}-\hat{\vec{x}}\right)\left(\vec{x}-\hat{\vec{x}}\right)^\dag\right]=\mathbb{E}\left[\vec{x}\vec{x}^\dag\right]+\mathbb{E}\left[\hat{\vec{x}}\hat{\vec{x}}^\dag\right]=2. \notag \eea
Therefore, the below inequality holds,
\bea P_{\mathrm{e}}^{\mathrm{LMMSE}}\leq  \mathbb{E}\left[\left(\vec{x}-\hat{\vec{x}}^{\mathrm{LMMSE}}\right)\left(\vec{x}-\hat{\vec{x}}^{\mathrm{LMMSE}}\right)^\dag\right]\Big/ 2 .\notag \eea

\section*{Appendix C: Proof of Property 2 }

As the HOM shortener is a special case of the FOM shortener, by definition $I_{\mathrm{HOM}}\!\leq\!I_{\mathrm{FOM}}^0$ holds. With $\sigma\!=\!0$ and from Theorem 1, by identifying $~G(\omega)\!=\!|F(\omega)|^2~$, $I_{\mathrm{LB}}$ can be written in the same form as in (\ref{Ilb2}). As the UBM shortener maximizes (\ref{Ilb2}) under constraint that $1\!+\!G(\omega)\!\geq\! 0$ for all $\omega$, which is also true for stetting $G(\omega)\!=\!|F(\omega)|^2$,  therefore, $I_{\mathrm{FOM}}^0\!\leq\! I_{\mathrm{UBM}}$ holds. 

Next, we prove $ I_{\mathrm{UBM}} \!\leq\! \mathcal{C}$. Note that,
 \bea G(\omega)=2\mathcal{R}\left\{g_0+\sum_{k=1}^{\nu}g_k\exp\!\big(jk\omega \big)\right\}.\notag \eea
Taking the differential of $I_{\mathrm{UBM}}$ in (\ref{Ilb2}) with respect to $g_k$ and $g_k^{\ast}$results in
 \bea \int_{-\pi}^{\pi}\!\frac{\exp\!\big(jk\omega \big)}{1+G(\omega)}\mathrm{d}\omega\!=\!-\int_{-\pi}^{\pi}M(\omega)\exp\!\big(jk\omega \big)\mathrm{d}\omega,\; -\nu\leq k \leq\nu. \notag  \eea
Hence, the below equality holds with the optimal $G(\omega)$, which we denote as $G_0(\omega)$,
 \bea \label{gm} \frac{1}{1+G_0(\omega)}+M(\omega)=2\mathcal{R}\left\{\sum_{|k|>\nu}\tau_k\exp\!\big(jk\omega \big)\right\},  \eea
for some constants $\tau_k$. On the other hand, as
 \bea \label{g0} G_0(\omega)=2\mathcal{R}\left\{\hat{g}_0+\sum_{k=1}^{\nu}\hat{g}_k\exp\!\big(jk\omega \big)\right\}, \eea
for some $\hat{\vec{g}}\!=\!(\hat{g}_0,\hat{g}_1,\cdots,\hat{g}_{\nu})$, multiplying both sides in (\ref{gm}) with $\left(1+G_0(\omega)\right)$ results in
 \bea \label{gm2} 1+M(\omega)\left(1+G_0(\omega)\right)
=-2\left(1+G_0(\omega)\right)\!\mathcal{R}\left\{\sum_{k>\nu}\tau_k\exp\!\big(jk\omega \big)\right\}. \eea
 \hspace{-1.2mm}Integrating (\ref{gm2}) over $\omega$ in $[-\pi,\pi)$ and utilizing (\ref{g0}) lead to
  \bea \label{gm3}\frac{1}{2\pi} \int_{-\pi}^{\pi}M(\omega)\left(1+G_0(\omega)\right)\mathrm{d}\omega=-1. \eea
 Therefore, with $G_0(\omega)$, $ I_{\mathrm{LB}}$ in (\ref{Ilb2}) equals
 \bea I_{\mathrm{UBM}}=\frac{1}{2\pi}\int_{-\pi}^{\pi}\log(1+G_0(\omega))\mathrm{d}\omega. \notag \eea
As the logarithm function is concave, from the definition of $M(\omega)$ in (\ref{mw}) and utilizing (\ref{gm3}),
{\setlength\arraycolsep{2pt}\bea I_{\mathrm{UBM}}-\mathcal{C} &=&\frac{1}{2\pi}\int_{-\pi}^{\pi}\log(1+G_0(\omega))\mathrm{d}\omega -\frac{1}{2\pi}\int_{-\pi}^{\pi}\log\left(1+\frac{|H(\omega)|^2}{N_0}\right)\mathrm{d}\omega \notag \\
&=&\frac{1}{2\pi}\int_{-\pi}^{\pi}\log\big(- M(\omega)\left(1+G_0(\omega)\right)\big)\mathrm{d}\omega \notag \\
&\leq&\log\left(- \frac{1}{2\pi}\int_{-\pi}^{\pi}M(\omega)\left(1+G_0(\omega)\right)\mathrm{d}\omega\right) \notag \\
&=& 0. \notag\eea}
\hspace{-1.2mm}Therefore, $I_{\mathrm{UBM}}\leq\mathcal{C}$ holds which completes the proof.

\bibliographystyle{IEEEtran}

\end{document}